\documentclass[11pt]{article}

\usepackage[utf8]{inputenc}
\usepackage{authblk}
\usepackage{amsfonts,amsmath,amssymb,mathtools,amsthm}
\usepackage{stmaryrd}
\usepackage{mathdots}
\usepackage{graphicx}
\usepackage{float}
\usepackage{xcolor}
\usepackage{todonotes}
\usepackage{comment}
\usepackage{thmtools}
\makeatletter
\renewcommand\thmt@autorefsetup{%
  \@xa\def\csname\thmt@envname autorefname\@xa\endcsname\@xa{\thmt@thmname}%
}
\makeatother
\usepackage[top=2cm,bottom=2cm,left=2cm,right=2cm,a4paper]{geometry}
\usepackage[margin=2cm]{caption}
\usepackage{graphicx}
\usepackage{tikz}
\usetikzlibrary{arrows.meta,positioning,calc,fit,backgrounds}

\definecolor{linkcolour}{rgb}{0.15,0.15,0.55}
\definecolor{urlcolour}{rgb}{0.15,0.15,0.55}
\definecolor{citecolour}{rgb}{0.15,0.15,0.55}

\usepackage[linktoc=all,backref=page]{hyperref}
\hypersetup{
  colorlinks=true,
  linkcolor=linkcolour,
  urlcolor=urlcolour,
  citecolor=citecolour,
  linktoc=all,
  hypertexnames=false,
  unicode=true,
  bookmarksnumbered=false,
  pdfmenubar=true,
  pdftoolbar=true
}

\numberwithin{equation}{section}

\theoremstyle{plain}
\newtheorem{theorem}{Theorem}[section]
\newtheorem{proposition}[theorem]{Proposition}
\newtheorem{lemma}[theorem]{Lemma}
\newtheorem{corollary}[theorem]{Corollary}

\newtheorem{definition}[theorem]{Definition}
\newtheorem{remark}[theorem]{Remark}
\newtheorem{example}[theorem]{Example}

\def\br{\begin{remark}\normalfont\small}
\def\er{\end{remark}}
\def\bt{\begin{theorem}}
\def\et{\end{theorem}}
\def\bd{\begin{definition}}
\def\ed{\end{definition}}
\def\bp{\begin{proposition}}
\def\ep{\end{proposition}}
\def\bl{\begin{lemma}}
\def\el{\end{lemma}}
\def\bc{\begin{corollary}}
\def\ec{\end{corollary}}
\def\beaq{\begin{eqnarray}}
\def\eeaq{\end{eqnarray}}

\newcommand\encadremath[1]{\vbox{\hrule\hbox{\vrule\kern8pt
\vbox{\kern8pt\hbox{$\displaystyle #1$}\kern8pt}
\kern8pt\vrule}\hrule}}

\newcommand\framefig[1]{%
\begin{figure}[bth]
\hrule\hbox{\vrule\kern8pt
\vbox{\kern8pt\vbox{\begin{center}#1\end{center}}\kern8pt}
\kern8pt\vrule}\hrule
\end{figure}}
\newcommand\figureframex[3]{%
\begin{figure}[bth]
\hrule\hbox{\vrule\kern8pt
\vbox{\kern8pt\vbox{
\begin{center}\includegraphics[width=#1cm]{#2}\end{center}
\caption{#3}}\kern8pt}
\kern8pt\vrule}\hrule
\end{figure}}
\newcommand\figureframey[3]{%
\begin{figure}[bth]
\hrule\hbox{\vrule\kern8pt
\vbox{\kern8pt\vbox{
\begin{center}\includegraphics[height=#1cm]{#2}\end{center}
\caption{#3}}\kern8pt}
\kern8pt\vrule}\hrule
\end{figure}}

\newcommand{\be}{\begin{equation}}
\newcommand{\ee}{\end{equation}}
\newcommand{\beq}{\begin{equation}}
\newcommand{\eeq}{\end{equation}}
\newcommand{\bea}{\begin{eqnarray}}
\newcommand{\eea}{\end{eqnarray}}
\newcommand{\beqq}{\begin{equation*}}
\newcommand{\eeqq}{\end{equation*}}
\newcommand{\beaa}{\begin{eqnarray*}}
\newcommand{\eeaa}{\end{eqnarray*}}

\newcommand{\tr}{\operatorname{tr}\,}

\newcommand{\diag}{\operatorname{diag}}

\newcommand\blfootnote[1]{%
  \begingroup
  \renewcommand\thefootnote{}\footnote{#1}%
  \addtocounter{footnote}{-1}%
  \endgroup}

\newcommand{\Res}{\mathop{\,\mathrm{Res}\,}}

\title{\textbf{Folding to the Bessel-Clifford system, and Equivariant Topological Recursion}}
\author{%
$_1$Mohamad Alameddine\thanks{%
Section de math\'{e}matiques, Universit\'{e} de Gen\`{e}ve,
Rue du conseil-g\'{e}n\'{e}ral 7-9,
1205 Gen\`{e}ve, Switzerland.%
}
\qquad
$_2$Alexander Hock\protect\footnotemark[1]
}
\date{\vspace{-5ex}}

\begin{document}

\maketitle

\textbf{Abstract}: We study the interplay between symmetry reductions of meromorphic connections, folding, and topological recursion. As a motivating example, we realize the Bessel–Clifford system as a folding of a symmetry-invariant subspace of the $0-$member of the Painlev\'e IV hierarchy. We introduce equivariant topological recursion for spectral curves with a Klein four-group $\mathbb Z_2\times\mathbb Z_2$ symmetry and prove that the corresponding equivariant correlators can be folded to the ordinary topological recursion correlators on the quotient spectral curve. We further establish the compatibility of this folding with the associated wave functions and quantum curves. In the Bessel–Clifford example, this yields an explicit relation between the equivariant and folded wave functions and recovers the corresponding quantum curve reduction. These results provide a framework for relating symmetry reductions of integrable systems to folding operations in topological recursion.

\blfootnote{\textit{Email Addresses:}
$_{1}$\textsf{mohamad.alameddine@unige.ch}
$_{2}$\textsf{alexander.hock@unige.ch}}

\tableofcontents

\section{Introduction}\label{SecIntro}

\paragraph{General motivation.} A widely used approach to the study of certain classes of nonlinear differential equations, particularly integrable systems, is to associate auxiliary linear systems to them. This provides access to a rich collection of geometric and analytic tools, in particular those arising from the theory of meromorphic connections. After choosing a trivialization, a meromorphic connection is represented by a linear differential system, and its deformation theory can encode the nonlinear equation under consideration. The Painlevé equations provide a fundamental class of examples of this principle: they arise as the compatibility conditions of suitable isomonodromic deformation problems for meromorphic connections. This perspective naturally raises the following question: if a transformation or structural relation between two nonlinear differential equations is known, can it be lifted to a corresponding transformation between their associated linear systems or meromorphic connections? In particular, can the transformation be understood geometrically at the level of the linearized objects from which the nonlinear equations arise? A classical family of examples is provided by the Bäcklund transformations, named after the Swedish mathematician Albert V. Bäcklund (see the classical book \cite{Hermann1976TheGO}). Their interpretation and geometric origin in the context of Painlevé equations and isomonodromic deformation theory have been extensively studied and remain an active subject of research; see, for example, Chapter 6 of \cite{FIKN2006}. Understanding such transformations at the level of meromorphic connections can reveal structures that are not apparent from the nonlinear equations alone. Beyond providing a geometric interpretation of known transformations, this viewpoint may also offer a systematic framework for discovering new correspondences between differential equations and for identifying the geometric mechanisms underlying them. 

\medskip

In this article, we investigate a different phenomenon linking two classical differential equations, the first is the Bessel equation
\begin{align}
x^2 \frac{d^2 y}{dx^2} + x\frac{dy}{dx} + \left(x^2-\alpha^2\right)y=0.
\end{align}
The Bessel equation has a regular singularity at $x=0$ and an irregular, unramified singularity at $x=\infty$ of rank $1$. The parameter $\alpha\in\mathbb{C}$ determines the local exponents at the regular singularity and, in particular, controls the corresponding local monodromy; it is traditionally referred to as the order. We consider alongside it the Bessel--Clifford equation, also known as Kummer's second equation or the confluent hypergeometric limit equation,
\begin{align}
x\frac{d^2 y}{dx^2} +a\frac{dy}{dx} =y.
\end{align}
The two equations are related by a quadratic change of variables. More precisely, if $f$ is a solution of the Bessel--Clifford equation, then
\begin{equation}
y(x)=x^{a-1}f\left(-\frac{x^2}{4}\right)
\end{equation}
is a solution of the Bessel equation with $\alpha=a-1$. Thus, the solutions of the Bessel equation can be viewed, after a suitable rescaling, as pullbacks of solutions of the Bessel--Clifford equation under the double covering
\begin{equation}
x\longmapsto -\frac{x^2}{4}.
\end{equation}
This relation at the level of scalar differential equations naturally raises the question of whether an analogous correspondence exists at the level of the associated rank-two systems. More precisely, we ask whether the quadratic transformation relating the two scalar equations can be lifted to a geometric relation between the corresponding rank-two meromorphic connections, and how the singularity structures and associated formal data are transformed under such a lift. We will introduce an operation motivated by a known Painlev\'e example, the symmetric Painlev\'e II Lax representation (or Flaschka-Newell Painlev\'e II) and give a geometric origin for this phenomenon. We call this phenomenon in our case a \emph{Folding} since it allows us to pass from the Bessel system to a scaled Bessel-Clifford system.

\medskip

The quadratic transformation is therefore not merely a change of variables at the level of scalar equations. At the level of meromorphic connections it is realized by a genuine quotient construction: the involution on the spectral parameter equips the connection with a $\mathbb{Z}_2-$equivariant structure, and the quotient produces a new connection whose formal type is ramified. Thus the classical Bessel–Bessel–Clifford transformation provides a simple model for a more general phenomenon: symmetry reduction and folding can convert an unramified meromorphic connection into a ramified one. The symmetry reduction of the connection has a natural counterpart at the level of its characteristic spectral curve. This raises a second, more structural question. Topological recursion (TR) is constructed from spectral curve data and produces, among other objects, multidifferentials, free energies and wave functions. If the spectral curve carries a finite symmetry and the corresponding connection descends to a quotient, does TR admit a compatible notion of folding?

\medskip

TR is a universal formalism that defines a family of multi-differentials $\omega_{g,n}$ on a given spectral curve. While its original formulation was motivated by loop equations coming from matrix models \cite{Eynard:2007kz}, it admits applications in various fields of mathematics including Gromov-Witten theory \cite{Bouchard:2007ys,Norbury:2014,Fang2019}, intersection theory on moduli spaces of curves \cite{Eynard:2011ga,Dunin-Barkowski:2012kbi}, and integrable systems \cite{Eynard:2021sxg,Iwaki-P1,Marchal:2019bia} just to mention a few. The spectral curve for the initial data of TR is taken to be a Riemann surface $\Sigma$ along with two (usually) meromorphic functions $x$ and $y$ and a symmetric bi-differential denoted $B$. The original TR was defined near the ramified points of $x$ and required them to be simple and regular points for the function $y$ and non-vanishing for $dy$. A generalization of TR, referred to as \textit{equivariant TR} (eq-TR) is presented in \cite{Giacchetto:2021atk} developed for a different purpose and deals with spectral curves that admit a certain symmetry under a finite group action. The recursion itself accommodates this symmetry and manifests it on the recursion building the higher differentials. 

\medskip

The purpose for which eq-TR was first developed in \cite{Giacchetto:2021atk} was the study of spin Hurwitz numbers, which enumerate ramified covers of the Riemann sphere with weights determined by theta characteristics \cite{Alexandrov:2021izt}. These numbers are closely related to the Gromov--Witten theory of K\"ahler surfaces and are generated by BKP tau-functions; see also \cite{Giacchetto2025}, where this framework was related to the spin Gromov--Witten theory of $\mathbb{P}^1$ \cite{GiacchettoKramerLewanskiSauvaget2025}. On the other hand, TR has been related to meromorphic differential systems and their associated spectral curves. In this setting, the characteristic polynomial of the differential system provides, under suitable assumptions, the spectral curve serving as the initial data for the recursion, and the resulting invariants can be related to the asymptotic data of the underlying system \cite{Marchal:2019bia,Eynard:2021sxg}. Moreover, through the quantization of the spectral curve, one can construct a quantum curve whose WKB solutions are related to solutions of the original differential system. These two perspectives suggest a natural question in the presence of a symmetry of the characteristic polynomial. The symmetry of the connection induces a symmetry of this curve, if the connection can be folded to a quotient connection, and its spectral curve consequently descends to a quotient spectral curve, one may ask whether the same operation can be implemented at the level of eq-TR and TR associated with these curves. More precisely, can the passage from a symmetric spectral curve to its quotient be reflected directly in the multidifferentials produced by eq-TR and TR? And, if so, is the corresponding relation preserved after quantization, at the level of wave functions and quantum curves?

\paragraph{Main ideas and organization.} We will start from the meromorphic differential system side where we fix the divisor corresponding to the $0-$member of the Painlev\'e IV hierarchy, this corresponds to a simple finite pole and to an infinite pole of order $r_\infty=2$. We adapt the construction given in the generic case in \cite{MarchalOrantinAlameddine2022} to this, note that the lower degree at infinity requires some additional constraints, this is done in \autoref{sec2.1}. The same section then introduces the symmetry condition \autoref{DefSymmetry} and establishes the invariant locus as the Bessel system giving the Bessel equation.
The symmetry condition is an involution and has recently been used in \cite{alameddine2026symmetryreductionpainleveiv} to characterize the Flaschka-Newell Painlev\'e II hierarchy as an invariant locus of a sub-hierarchy of Painlev\'e IV. Our goal is to give two new representations of this symmetry in terms of well known operations on isomonodromy systems. The first realization is given by the well-known fact that the Flaschka-Newell Lax pair of Painlev\'e II admits a certain symmetry that allows for a certain folding operation \cite{Flaschka1980} that we define here (see for instance the book \cite{FIKN2006}). The resulting Lax matrix admits a leading order that is non-diagonalizable and thus a ramified structure at the infinite pole. We will motivate this in \autoref{sec2.2} when we revisit the folding operation and apply it in detail for the Bessel system. \autoref{TheoFolding} provides then the relation of the folding to the Bessel-Clifford system recovering the classical relation between the Bessel and Bessel-Clifford equations. In other words, the symmetry reduction of \cite{alameddine2026symmetryreductionpainleveiv} along with an additional folding operation can be seen as a twist of the divisor allowing for ramification.

\medskip

The second interpretation of the symmetry is given in \autoref{SubsecSchrod} where we discuss the relation to the work of \cite{Iwaki2019},  in particular, by passing to the Schrödinger gauge of the systems (for Bessel and Bessel-Clifford) we are able to identify the potentials of the spectral curves with the ones given in \cite{Iwaki2019} and we find that the folding operation actually provides the classical passage from the Kummer curve to the Bessel curve (we follow here the names given in \cite{Iwaki2019} and the diagram presented). Thus the symmetry reduction composed with the folding reproduces the same passage as the one presented in the confluence diagram from Kummer to Bessel curves obtained as classical limits of the confluent family of the Gauss hypergeometric equations. This provides the bridge between the meromorphic-connection construction and the TR part of the paper. This part also includes a general discussion on equivariant connections and the analysis of the symmetry in rank $2$ which breaks down to two cases. 

\medskip

We then turn to the eq-TR side. Let a spectral curve carry a finite symmetry, even when the quotient curve is well defined, it is not automatic that the TR invariants of the two curves are related in a simple way. The recursion depends not only on the functions defining the spectral curve but also on the bidifferential and on the local structure at the ramification points. Consequently, the quotient operation must be incorporated into the recursive construction itself. For the curves relevant here, the symmetry group is
\begin{align}
    G=\mathbb Z_2\times\mathbb Z_2,
\end{align}
where one involution is the hyperelliptic involution and the second is induced by the folding symmetry. We formulate the corresponding equivariant recursion and analyze the two natural choices of one-dimensional representations, or characters, entering its definition in \autoref{sec3.1} and precisely \autoref{def:equTR}. Although the two choices correspond to different equivariant inputs, we prove that they produce the same stable correlators \autoref{thm:hyperellepticequTR}. This provides a useful structural simplification and allows us to work with a single set of eq-TR differentials. The proof relies on the loop equations satisfied by the differentials along with a recursive argument. 

\medskip

We proceed to define the folding operation on symmetric curves, the goal is to relate the differentials produced by eq-TR to the differentials of the ordinary TR on the quotient spectral curve. This is done in \autoref{thm:foldingomegagn}, where it is shown that the two sets of differentials are related
    \begin{align}
        \omega^{eq}_{g,n}(z_1,\ldots,z_n)
        =2^{2g+n-2}\omega^{TR}_{g,n}(w(z_1),\ldots,w(z_n)).
    \end{align}
after the correlators are compared through the quotient map. The proof follows a similar argument used in \cite{Alexandrov:2025kzn} relating the initial data of both recursions and then establishing a recursive argument that allows for the explicit identification between the folded differentials and the differentials of the original TR. We show this on various examples covering even non-algebraic curves, among the examples the Bessel and Bessel-Clifford folding that serves as a test for the general mechanism developed throughout. After relating the set of differentials, we finally move to a relation between the quantum curves generated from eq-TR and TR. To our knowledge, this is the first discussion of the quantization procedure using eq-TR, although, our methods rely on the results relating the differentials to the ordinary TR on a quotient curve. More precisely, we define the wave function annihilated by the quantization of the spectral curve by the usual integral of the differentials. The definition being the same in the eq-TR or TR setups, we are able to conclude a relation between the quantum curves given by (\autoref{thm:quantumcurve})
 \begin{align}
        \hat{P}^{eq}_{\hbar/2}(\hat{x},\hat{y})=\hat{P}^{TR}_\hbar \bigg(\hat{x}^2,\frac{1}{2\hat{x}}\hat{y}-\frac{\hbar}{4\hat{x}^2}\bigg).
    \end{align}
Finally, by applying the construction to the same examples, we are able to recover the same folding operation between the quantum curves given in \cite{Iwaki2019}. This recovers the same folding operation between equivariant meromorphic differential systems and eq-TR folding to the ordinary TR.

\paragraph{Illustrative diagram and organization.}The main results of the paper can consequently be summarized in three steps, equivalently in \autoref{fig:symmetry-reduction-folding} where we show the different parts of the construction. 

\medskip
\begin{figure}[htbp]
    \centering
    \resizebox{\linewidth}{!}{%
    \begin{tikzpicture}[
        >=Stealth,
        node distance=9mm and 38mm,
        every node/.style={font=\small},
        block/.style={
            draw,
            rounded corners,
            align=center,
            text width=6.0cm,
            minimum height=1.0cm,
            inner sep=9pt,
            fill=white
        },
        arrow/.style={->, thick},
        fold/.style={->, thick, dashed},
        upperbox/.style={
            draw=orange!80!black,
            fill=orange!18,
            rounded corners,
            inner sep=12pt
        },
        lowerbox/.style={
            draw=green!50!black,
            fill=green!18,
            rounded corners,
            inner sep=12pt
        }
    ]

    \node[block] (piv) {
        \textbf{Equivariant connection}
    };

    \node[block, right=of piv] (bessel) {
        \textbf{Folded connection}
    };

    \node[block, below=of piv] (originalcurve) {
        \textbf{$\mathbb{Z}_2$-equivariant spectral curve}
    };

    \node[block, below=of bessel] (besselcurve) {
        \textbf{Folded spectral curve}
    };

    \node[block, below=18mm of originalcurve] (eqtr) {
        \textbf{Equivariant TR}\\
        on the equivariant spectral curve
    };

    \node[block, right=of eqtr] (ordinarytr) {
        \textbf{Ordinary TR}\\
        on the folded spectral data
    };

    \node[block, below=of eqtr] (eqquant) {
        \textbf{Equivariant wave function}\\
        Equivariant quantum curve
    };

    \node[block, right=of eqquant] (ordinaryquant) {
        \textbf{Folded wave function}\\
        Ordinary quantum curve
    };

    \begin{scope}[on background layer]
        \node[upperbox,
            fit=(piv)(bessel)(originalcurve)(besselcurve)] {};

        \node[lowerbox,
            fit=(eqtr)(ordinarytr)(eqquant)(ordinaryquant)] {};
    \end{scope}

    \draw[arrow] (piv.east) -- node[above, align=center]
        {Folding} node[below]
        {\autoref{TheoFolding}} (bessel.west);

    \draw[arrow] (piv.south) -- node[left, align=right]
        {Characteristic\\equation} (originalcurve.north);

    \draw[arrow] (bessel.south) -- node[right, align=left]
        {Characteristic\\equation} (besselcurve.north);

    \draw[fold] (originalcurve.east) -- node[above, align=center]
        {Induced symmetry\\and folding} (besselcurve.west);

    \draw[arrow] (originalcurve.south) -- (eqtr.north);

    \draw[arrow] (besselcurve.south) -- (ordinarytr.north);

    \draw[arrow] (eqtr.east) -- node[above, align=center]
        {Folding} node[below]
        {\autoref{thm:foldingomegagn}} (ordinarytr.west);

    \draw[arrow] (eqtr.south) -- node[left, align=right]
        {Wave function\\construction} (eqquant.north);

    \draw[arrow] (ordinarytr.south) -- node[right, align=left]
        {Wave function\\construction} (ordinaryquant.north);

    \draw[fold] (eqquant.east) -- node[above, align=center]
        {Quantum curve\\correspondence} node[below]
        {\autoref{thm:quantumcurve}} (ordinaryquant.west);

    \draw[arrow]
        (eqquant.west) -- ++(-1.2cm,0) |- (piv.west);

    \draw[arrow]
        (ordinaryquant.east) -- ++(1.2cm,0) |- (bessel.east);

    \end{tikzpicture}%
    }
    \caption{Symmetry reduction and folding, from meromorphic
    connections to spectral curves. This provides the initial data for
    the corresponding eq-TR and TR and their relation following the folding.}
    \label{fig:symmetry-reduction-folding}
\end{figure}

First, we identify the Bessel system as the invariant locus of a $\mathbb Z_2$ symmetry of the $0$-member of the Painlev\'e IV hierarchy and construct the corresponding folded connection. The quotient under $\lambda\mapsto-\lambda$ yields a rational connection whose scalar reduction is the Bessel--Clifford equation. This gives a geometric realization of the classical Bessel--Bessel--Clifford transformation and explains the appearance of a ramified pole on the folded side.  This is the first part of the diagram, and using the characteristic polynomial, we obtain the spectral curves of both systems as indicated. Second, we develop the corresponding eq-TR construction for hyperelliptic curves with $\mathbb Z_2\times\mathbb Z_2$ symmetry. We prove the equivalence of the two natural character choices and establish the folding theorem identifying the equivariant correlators on the covering curve with the ordinary TR correlators on the quotient. The result provides a direct correspondence between symmetry-equivariant recursion and ordinary recursion after folding. This is the first passage of the second box of the diagram.

\medskip

Third, we prove that this correspondence is preserved by the construction of wave functions and quantum curves. The resulting transformation of the quantum curves agrees with the folding obtained directly from the differential-system side. Thus the same symmetry is visible consistently at the levels of connections, spectral curves, recursive invariants, wave functions, and quantum curves. This is the final part of the diagram which states the correspondence at the level of the associated quantum curves. The Bessel--Clifford system should therefore be viewed as the simplest non-trivial model in which the complete mechanism can be made explicit. The classical special-function identity is already well known; the new point is the identification of the geometric folding behind it and the proof that this folding propagates through TR and its quantization. 

\medskip

The paper is organized as follows. In \autoref{Section2}, we construct the Bessel system as an invariant locus of the $0$-member of the Painlev\'e IV hierarchy, introduce the folding operation, and obtain the Bessel--Clifford connection. We then pass to the Schr\"odinger gauge and identify the resulting spectral curves with the curves appearing in the confluent hypergeometric setting, this is mainly the orange part of the diagram. In \autoref{sec3}, we develop the eq-TR framework for the $\mathbb Z_2\times\mathbb Z_2$-symmetric curves considered here. We prove the equivalence of the two character choices, establish the folding relation with ordinary TR, and finally prove the compatibility of the folding with wave functions and quantum curves. This is the main construction in the green box of the diagram. While several of the observations made in the present work hold in a rather general context, a lot of open questions can be raised from them. We conclude this work by \autoref{sec4} which discusses several directions suggested by the construction, including finite-group foldings, successive quotient operations, and possible applications to higher members of Painlev\'e hierarchies.

\subsection*{Acknowledgment}
The authors would like to thank Sergey Shadrin for discussions.
M. A. was partially funded by the SNSF under the grant agreement PZ00P2\_223297 and by the DFG -- project ID 551478549.  A.H. was funded by the Swiss National Science Foundation (SNSF) through the
Ambizione project “TRuality: Topological Recursion, Duality and Applications”
under the grant agreement PZ00P2\_223297 and partially by the DFG -- project ID 551478549.

\subsection*{Declaration of Generative AI and AI-assisted Technologies in the Writing Process}
Generative AI was used to assist with proofreading, identifying typographical errors, and improving the exposition. All mathematical content and scientific results are the authors' own work where no generative AI was used.

\section{The Bessel-Clifford system as a folding of an invariant locus}\label{Section2}
The starting point of this section is the genus $0$ element of the PIV hierarchy and its reduction to give the Bessel system (ultimately this leads to the Bessel equation as a quantum curve satisfied by the horizontal sections). Another goal for this section is to introduce the folding of the system giving the folded version and to recover the Bessel-Clifford system as a result. We will provide the Sch\"odinger gauge of both systems and realize the correspondence between our setup and that of \cite{Iwaki2019} giving the passage from the Kummer to Bessel curves.

\subsection{The $0-$member of the PIV hierarchy and the Bessel system} \label{sec2.1}

The purpose of this section is to establish the theory behind the $0-$element of the PIV hierarchy, for this, we will restrict our study to the case of a Lax matrix admitting a polar structure with an infinite pole $r_\infty=2$ and a tame pole that we will fix later to be at $\{ 0 \}$ in the sense of \cite{MarchalOrantinAlameddine2022}. The space of connections is defined as follows
\begin{definition}[Space of rational connections] The space of rational connections consisting of a bounded divisor by $\mathcal{D}= 2[\infty] + 1[X_0]$ is the Poisson space given by 
\begin{align}
    F_{\mathcal{D},3} := \left\{ \hat{L}(\lambda) = \hat{L}_\infty + \frac{ \hat{L}_{0}}{\lambda - X_0} \qquad
    \text{with} \qquad \left\{ \hat{L}_\infty, \hat{L}_0 \right\} \in    \mathfrak{gl}_2^2   \right\}  \big/ GL_2(\mathbb{C}), 
\end{align}
where the quotient of the complex reductive group indicates the conjugation action on all the coefficients. The horizontal sections are by definition the solutions of the equation $\nabla \Psi=0$, which after trivializing using the usual analytic structure of $\mathbb{P}^1$ becomes the meromorphic differential system 
\begin{align}
    \partial_\lambda \hat{\Psi} = \hat{L} (\lambda ) \hat{\Psi}
\end{align}
    We define the space $\hat{F}_{\mathcal{D},3} \subset F_{\mathcal{D},3}$ consisting of connections that are generic, in the sense that the eigenvalues of the leading orders at the poles are distinct, and thus both matrices are diagonalizable\footnote{We also assume non-resonance for the tame pole.}.  
\end{definition}
The number $3$ in the definition is an adaptation of the notation of \cite{MarchalOrantinAlameddine2022} and it indicates the total degree of the divisor of the connection. The Poisson structure characterizing this space is given for instance in \cite{Adams:1989ax,HarnadHurtubise} and it is inherited from the algebraic point of view of loop algebras. Following this construction, this case needs a special treatment due to the low order of the infinite pole. We will explain the necessary modifications needed to construct a global expression for the connection matrix $\hat{L}(\lambda)$. The starting point is the local diagonalization of the meromorphic differential system \cite{Wasowbook} (or equivalently the Birkhoff factorization), we will record the diagonalization for a general order $r_\infty$ and only then use it in our specific case: 
\begin{proposition}[Local diagonalization] Let $p$ be a point of the divisor $\mathcal{D}$ having a degree $r_p$ and $z_p(\lambda)$ a vanishing coordinate in the neighbourhood of the point $p$, then there exists a local gauge matrix $G_p$ such that the gauge transformation $\Psi_p = G_p  \hat{\Psi}$ provides 
\begin{align}
    \Psi_p (\lambda) = \Psi_p^{(\text{reg})} \diag\left(\exp\left(- \sum_{k=1}^{r_p-1}\frac{t_{p^{(1)},k}}{k z_p(\lambda)^{\,k}}  +t_{p^{(1)},0}\ln z_p(\lambda)\right) , \exp\left(- \sum_{k=1}^{r_p-1}\frac{t_{p^{(2)},k}}{k z_p(\lambda)^{\,k}}  +t_{p^{(2)},0}\ln z_p(\lambda)\right) \right) 
\end{align}
where $\Psi_p^{(\text{reg})}$ is regular in the neighbourhood of the pole $p$. This corresponds to a connection matrix given by 
\begin{align}
    G_p \,\hat{L}(\lambda) d\lambda\, G_p^{-1} + (\partial_\lambda G_p)G_p^{-1}d\lambda = dQ_p(z_p) + \Lambda_p \frac{dz_p}{z_p} 
 \qquad \text{where} \qquad Q_p(z_p) = \sum_{k=1}^{r_p-1} \frac{Q_{p,k}}{z_p^{\, k}}
\end{align}
The coefficients $Q_{p,k}$ contain the complex numbers which parameterize the irregular type of the system, while $\Lambda_p$ is referred to as the exponent of formal monodromy and contains the formal monodromies of the meromorphic differential system. They are given by 
\begin{align}
    Q_{p,k} =  \diag \left(- \frac{t_{p^{(1)},k}}{k},- \frac{t_{p^{(2)},k}}{k}\right) \qquad \text{ and } \qquad 
\Lambda_p = \diag(t_{p^{(1)},0},t_{p^{(2)},0})\,,\, \forall \, p\in \mathcal{D}\,.
\end{align}
One realizes effectively that for a tame pole with $r_p=1$ there is no irregular type.
\end{proposition}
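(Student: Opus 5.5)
The plan is the classical formal reduction argument (Wasow, Birkhoff), done in two stages: first diagonalize formally, then upgrade the formal gauge to an analytic one.

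\emph{Step 1: local form of the connection.} Write the connection locally at $p$ as
\begin{align}
\hat L\,d\lambda=\Big(\sum_{k\ge -r_p} A_k z_p^{k}\Big)dz_p ,
\end{align}
which includes the Jacobian of the change of coordinate. Genericity says the leading coefficient $A_{-r_p}$ has distinct eigenvalues. A constant gauge $G^{(0)}$ then diagonalizes it.

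\emph{Step 2: formal diagonalization (splitting lemma).} Look for a formal gauge $G=(I+\sum_{j\ge1}g_jz_p^{j})G^{(0)}$ with each $g_j$ off-diagonal, chosen so that the transformed connection is diagonal. Equating coefficients of $z_p^{-r_p+j}$ gives, at each order, an equation of the form
\begin{align}
[g_j,A_{-r_p}] = (\text{off-diagonal part of known lower-order terms}),
\end{align}
where the known terms involve only $g_1,\dots,g_{j-1}$. The diagonal part of the same order simply defines the next coefficient of the diagonal form.

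\emph{Step 3: solvability and the form of the diagonal connection.} Since $\mathrm{ad}_{A_{-r_p}}$ is invertible on off-diagonal matrices (distinct eigenvalues), the recursion in Step 2 is uniquely solvable at every order. The resulting diagonal connection has polar part
\begin{align}
\sum_{k=1}^{r_p-1}(-k)Q_{p,k}z_p^{-k-1}dz_p+\Lambda_p\frac{dz_p}{z_p},
\end{align}
which equals $dQ_p+\Lambda_p\,dz_p/z_p$, plus a diagonal holomorphic remainder. This remainder is removed by a further diagonal holomorphic gauge $\exp(-\int \text{remainder})$.

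\emph{Step 4: tame pole.} For $r_p=1$ the argument is slightly different. Non-resonance (eigenvalues of $A_{-1}$ not differing by integers) lets one solve
\begin{align}
[\Lambda,g_j]+j\,g_j=\ldots
\end{align}
for all $g_j$ (here one uses full matrices $g_j$, not just off-diagonal ones). This yields a convergent gauge to $\Lambda_p\,dz_p/z_p$ with no irregular part, which gives the final sentence of the statement.

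\emph{Step 5: from formal to analytic.} Integrating the diagonal connection produces the exponential factor
\begin{align}
\exp\Big(-\sum_k \frac{t_{p^{(i)},k}}{k z_p^k}+t_{p^{(i)},0}\ln z_p\Big),
\end{align}
which fixes the parametrization $Q_{p,k}=\diag(-t_{p^{(i)},k}/k)$ and $\Lambda_p=\diag(t_{p^{(i)},0})$. The only genuine obstacle is that for $r_p\ge2$ the formal series $G$ generally diverges, so "regular" in the statement must be read as holding on sectors. I would cite the Hukuhara–Turrittin / Sibuya sectorial realization theorem from \cite{Wasowbook}, or the Birkhoff factorization with a truncated gauge. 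This gives an actual gauge $G_p$ on sectors, asymptotic to the formal one, such that $\Psi_p^{(\mathrm{reg})}$ is holomorphic and invertible on each sector near $p$.
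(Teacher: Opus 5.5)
Your proposal is correct and is essentially the argument the paper relies on. The paper gives no proof of its own; it only invokes the classical local reduction of \cite{Wasowbook}, namely formal splitting at an unramified irregular pole whose leading coefficient has distinct eigenvalues, convergent Fuchsian reduction under non-resonance at the tame pole, and sectorial realization, which is exactly what you reconstruct. Your caveat that for $r_p\geq 2$ the exact diagonal form holds only formally or on sectors is accurate. It also matches the paper's own remark after the statement that Stokes data live on sectorial domains, and its later use of $G_\infty$ only as a series in $\lambda^{-1}$.
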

We note that the set of formal monodromy is not the full monodromy since in our case, we have additional Stokes data that must be considered in sectorial analytic domains, we will not need this set for our purposes. We now proceed by adapting the above local diagonalization for our setting and using the conjugation action of the reductive group. In fact, in order to obtain an explicit expression, one needs to choose a representative of an orbit. Several choices exist and their advantages are dependent on the problem. We will fix here the natural choice of a diagonal leading order at infinity and we will not use the remaining action (of the stabilizer) to fix one of the entries of the leading order at the finite pole. In other words, our space is given by 
\begin{align}
      \widetilde{F}_{\mathcal{D},3} :=& \bigg\{ \widetilde{L}(\lambda) = \widetilde{L}_\infty + \frac{ \widetilde{ L}_{0}}{\lambda - X_0} \qquad
    \text{with} \qquad \left\{ \widetilde{L}_\infty, \widetilde{L}_0 \right\} \in    \mathfrak{gl}_2^2 \qquad  \text{and such that} \nonumber \\
   &   \widetilde{L}_\infty \, \,  \text{ and } \,\, \widetilde{L}_0 \,\,  \text{have distinct eigenvalues with the expression} \nonumber \\
    & \widetilde{L}_\infty = \diag (-t_{\infty^{(1)},1},-t_{\infty^{(2)},1}) \,\, \text{ and } \,\, \widetilde{L}_0 = \begin{pmatrix} \beta_{0} & \omega \\ \delta_0 & \gamma_0 \end{pmatrix}, \, (\beta_0,\delta_0,\gamma_0) \in \mathbb{C}^3 \bigg\} 
\end{align}
Furthermore, since $r_\infty=2$, we observe that the matrix $G_\infty(\lambda)$ is of the form $G_{\infty}(\lambda)=G_0+G_1\lambda^{-1}+O(\lambda^{-2})$ with $G_0$ diagonal (in order to preserve the fact that leading order is already diagonal). Consequently $G_{\infty}^{-1}\partial_\lambda G_\infty=O(\lambda^{-2})$. The diagonalization implies
\begin{align}
\label{SubLeadingCoeff2} \det (\widetilde{L}(\lambda)+G_\infty^{-1} \partial_\lambda G_\infty)\overset{\lambda\to \infty}{=}t_{\infty^{(1)},1}t_{\infty^{(2)},1}+(t_{\infty^{(1)},1}t_{\infty^{(2)},0}+t_{\infty^{(2)},1}t_{\infty^{(1)},0}) \lambda^{-1}+O(\lambda^{-2}).
\end{align}
Since $G_{\infty}^{-1}\partial_\lambda G_\infty=O(\lambda^{-2})$, the l.h.s. is of the form
\small{\begin{align} \label{Idrinftyequal2} \det (\widetilde{L}(\lambda)+G_\infty^{-1} \partial_\lambda G_\infty)&\overset{\lambda\to \infty}{=}&\begin{vmatrix} -t_{\infty^{(1)},1}+\beta_0\lambda^{-1}+O(\lambda^{-2})& \omega \lambda^{-1}+O(\lambda^{-2})\\
\delta_0 \lambda^{-1}+O(\lambda^{-2})& -t_{\infty^{(2)},1}+\gamma_0\lambda^{-1}+O(\lambda^{-2})\end{vmatrix} \nonumber   \\
&=&t_{\infty^{(1)},1}t_{\infty^{(2)},1}-(\gamma_0 t_{\infty^{(1)},1}+\beta_0 t_{\infty^{(2)},1}) \lambda^{-1}+O(\lambda^{-2}).
\end{align}}
\normalsize{Moreover}, the diagonalization of the singular part also implies that for $r_\infty=2$:
\begin{align} \label{Tracerinftyequal2}\tr  \widetilde{L}(\lambda)\overset{\lambda\to \infty}{=}-(t_{\infty^{(1)},1}+t_{\infty^{(2)},1})-(t_{\infty^{(1)},0}+t_{\infty^{(2)},0})\lambda^{-1}+  O(\lambda^{-2}).
\end{align}
Identifying \eqref{Idrinftyequal2} with the r.h.s. of \eqref{SubLeadingCoeff2} and using \eqref{Tracerinftyequal2} we get 
\begin{align} \beta_0=-t_{\infty^{(1)},0} \qquad \text{ and } \qquad \gamma_0=-t_{\infty^{(2)},0} \,.
\end{align}
We proceed by using the reduction of the above system following the action of the M\"obius transformations and the passage to traceless matrices $\mathfrak{gl}_2 \to \mathfrak{sl}_2$, removing the trace is naturally tensoring by a rank-one connection; it is not in general a globally single-valued rational scalar gauge. These actions have been carried out in details in \cite{MarchalOrantinAlameddine2022} covering the above case and we adapt this result to the present setting. 
\begin{lemma}[Cf. \cite{MarchalOrantinAlameddine2022}] The action of the M\"obius transformation and the passage to traceless matrices fixes, for all $p \in \mathcal{D}$, the following coefficients
\begin{align}
  t_{p^{(1)},k} =   -t_{p^{(2)},k} = t_{p,k}   \qquad \text{with} \qquad   X_0 = 0, \qquad \text{and} \qquad t_{\infty,1}=z
\end{align}
    where $z$ is taken to be a constant.
\end{lemma}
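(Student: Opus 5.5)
The plan is to exhibit explicit transformations that act on the parameters and to check that they preserve the form of $\widetilde{F}_{\mathcal{D},3}$. I would use three of them: a translation $\lambda\mapsto\lambda+c$, a dilation $\lambda\mapsto a\lambda$ (these are the Möbius maps fixing $\infty$, which is the only kind allowed since the pole orders $2[\infty]+1[X_0]$ must be kept), and the scalar shift $\widetilde{L}\mapsto \widetilde{L}-\tfrac12(\tr\widetilde{L})\,\mathrm{Id}$. For each one I would record how the data $(t_{p^{(i)},k},X_0,\beta_0,\gamma_0,\delta_0,\omega)$ change, and then show that they can be combined to reach the normalization claimed in the statement.

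\textbf{Step 1 (trace removal).} Tensoring by the rank-one connection $d-\tfrac12\tr\widetilde{L}(\lambda)\,d\lambda$ replaces $\widetilde{L}$ by its traceless part. This shifts both diagonal exponents at each pole by the same amount. Since \eqref{Tracerinftyequal2} and the analogous residue computation at $X_0$ express $\tr\widetilde{L}$ through the sums $t_{p^{(1)},k}+t_{p^{(2)},k}$, the traceless part has $t_{p^{(1)},k}+t_{p^{(2)},k}=0$ for all $p\in\mathcal{D}$ and $k\in\{0,1\}$. Concretely, at $\infty$ this means $\beta_0=-\gamma_0$, together with the corresponding relation for the leading order, and at the tame pole the residue $\widetilde{L}_0$ becomes traceless, so its eigenvalues are $\pm t_{X_0,0}$. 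This gives $t_{p^{(1)},k}=-t_{p^{(2)},k}=:t_{p,k}$.

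\textbf{Step 2 (Möbius normalization).} First, the translation $\lambda\mapsto\lambda+X_0$ sends the finite pole to $0$. Because $\widetilde{L}_\infty$ is constant, the form $\widetilde{L}(\lambda)d\lambda$ keeps its shape, and the $t$'s are unchanged. Second, under the dilation $\lambda=a\mu$, the form $\widetilde{L}(\lambda)d\lambda$ becomes $a\widetilde{L}_\infty\,d\mu+\widetilde{L}_0\,d\mu/\mu$. The residue at $0$ and the formal monodromies $t_{\infty,0}$ are therefore invariant, while $t_{\infty,1}\mapsto a\,t_{\infty,1}$. By genericity $t_{\infty,1}\neq0$, so we may choose $a=z/t_{\infty,1}$ to set $t_{\infty,1}=z$ for any prescribed nonzero constant $z$. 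The diagonal form of $\widetilde{L}_\infty$ and the relations $\beta_0=-t_{\infty^{(1)},0}$, $\gamma_0=-t_{\infty^{(2)},0}$ derived above survive both maps, so the resulting connection still lies in $\widetilde{F}_{\mathcal{D},3}$.

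\textbf{Main obstacle.} The difficulty is the bookkeeping in Step 1, not any analysis. Removing the trace at infinity must be compatible with the local diagonalization: the gauge $G_\infty$ has diagonal $G_0$ and $G_\infty^{-1}\partial_\lambda G_\infty=O(\lambda^{-2})$, so the scalar shift commutes with it and cannot disturb the identification of $\beta_0,\gamma_0$. One must also note that the scalar factor introduces a multivalued term $\lambda^{\,\mathrm{tr}}$. This is why the passage to traceless matrices is a tensoring by a rank-one connection rather than a single-valued gauge, and it does not affect the relations among the $t$'s. With these checks, the three operations jointly fix the stated coefficients.
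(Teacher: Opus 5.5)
Your proposal is correct and follows the same route the paper relies on. The paper gives no proof of its own and defers to the cited work of Marchal--Orantin--Alameddine, where the argument is exactly this: tensoring by a rank-one connection removes the trace and forces $t_{p^{(1)},k}=-t_{p^{(2)},k}$, and the affine M\"obius maps fixing $\infty$ (a translation, then a dilation, which is legitimate because $t_{\infty,1}\neq 0$ by genericity) give $X_0=0$ and $t_{\infty,1}=z$.
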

Note that when one passes to the traceless setup, the conjugation action of the reductive group $GL_2$ needs to be modified as well. In order to preserve the traceless property of the resulting gauge transformations, one has to satisfy 
\begin{align}
    \tr (\widetilde{A}) = - \tr (\partial_\lambda G G^{-1}) = - \partial_\lambda \log (\det G)
\end{align}
In other words, the determinant is taken to be a constant and thus passing from $GL_{2} (\mathbb{C})$ to $SL_{2} (\mathbb{C})$ up to a normalization of the determinant. Realize that in our system, we do not have any deformation parameters to consider isomonodromic deformations, indeed, this case corresponds to a $g= r-3 = r_\infty+1-3=0$, thus the symplectic fibration of the isomonodromic deformation parameters is of trivial dimension. Nevertheless, it is always possible to gauge transform the system (using automorphisms of the rank $2$ bundle) in order to pass to the oper gauge of the system and have explicit expressions, this is the main purpose of the following lemma. 
\begin{lemma} Let us define the following gauge matrix
\begin{align}
    G(\lambda):=\begin{pmatrix} 1&0\\
\widetilde{L}_{1,1}(\lambda)&\widetilde{L}_{1,2}(\lambda)
\end{pmatrix} = \begin{pmatrix}
    1 & 0 \\
    -t_{\infty,1}- \frac{t_{\infty,0} }{\lambda}& \frac{\omega}{\lambda}
\end{pmatrix}
\end{align}
then the gauge transformation $\Psi (\lambda):= G(\lambda ) \widetilde{\Psi}$ implies that $\Psi$ satisfies the companion like system
\begin{align}
    \label{GeneralLSys}\partial_\lambda \Psi(\lambda)=L(\lambda)\Psi(\lambda)\,, \qquad \text{with} \qquad L(\lambda)=\begin{pmatrix}0&1\\ L_{2,1}(\lambda)&L_{2,2}(\lambda) \end{pmatrix}
\end{align}
    with the non-trivial entries of the connection matrix given by 
    \begin{align} \label{entriesoper}
        L_{2,1}(\lambda) =  t_{\infty,1}^2  + \frac{H_{0,1}}{\lambda} + \frac{t_{0,0}^2}{\lambda^2}\,,          \qquad L_{2,2}(\lambda) = - \frac{1}{\lambda}
    \end{align}
\end{lemma}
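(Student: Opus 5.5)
The plan is to compute the gauge-transformed connection matrix directly. Under $\Psi = G\widetilde\Psi$ one has
\begin{align}
L = G\widetilde L G^{-1} + (\partial_\lambda G)G^{-1}.
\end{align}
After the reductions of the preceding Lemma (traceless, $X_0=0$, $t_{p^{(1)},k}=-t_{p^{(2)},k}=t_{p,k}$), I write
\begin{align}
\widetilde L(\lambda)=\begin{pmatrix}-t_{\infty,1}-\frac{t_{\infty,0}}{\lambda} & \frac{\omega}{\lambda}\\ \frac{\delta_0}{\lambda} & t_{\infty,1}+\frac{t_{\infty,0}}{\lambda}\end{pmatrix},
\end{align}
using $\beta_0=-t_{\infty^{(1)},0}$ and $\gamma_0=-t_{\infty^{(2)},0}$. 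The first row of $G$ is $(1,0)$, so the first component is unchanged: $\Psi_1=\widetilde\Psi_1$. The second row of $G$ is the first row of $\widetilde L$, so $\Psi_2=\widetilde L_{1,1}\widetilde\Psi_1+\widetilde L_{1,2}\widetilde\Psi_2=\partial_\lambda\widetilde\Psi_1$. This immediately gives the first row $(0,1)$ of $L$, which is the companion form.

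Rather than inverting $G$ blindly, I will differentiate $\Psi_2$ once more and eliminate $\widetilde\Psi_2$ using $\widetilde\Psi_2=(\Psi_2-\widetilde L_{1,1}\Psi_1)/\widetilde L_{1,2}$. This is legitimate since $\det G=\omega/\lambda\neq 0$ generically. It yields the standard oper formulas
\begin{align}
L_{2,2}=\tr\widetilde L+\frac{\partial_\lambda \widetilde L_{1,2}}{\widetilde L_{1,2}},\qquad
L_{2,1}=-\det\widetilde L+\partial_\lambda\widetilde L_{1,1}-\widetilde L_{1,1}\frac{\partial_\lambda\widetilde L_{1,2}}{\widetilde L_{1,2}}.
\end{align}
With $\tr\widetilde L=0$ and $\widetilde L_{1,2}=\omega/\lambda$, the logarithmic derivative equals $-1/\lambda$. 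This gives $L_{2,2}=-1/\lambda$ directly.

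For $L_{2,1}$, I expand the determinant:
\begin{align}
-\det\widetilde L=(t_{\infty,1}+t_{\infty,0}\lambda^{-1})^2+\omega\delta_0\lambda^{-2}.
\end{align}
The derivative term $\partial_\lambda\widetilde L_{1,1}=t_{\infty,0}\lambda^{-2}$ and the term $-\widetilde L_{1,1}\cdot(-1/\lambda)=-(t_{\infty,1}+t_{\infty,0}\lambda^{-1})\lambda^{-1}$ contribute the rest. Collecting powers, the constant term is $t_{\infty,1}^2$, as claimed. The $\lambda^{-1}$ coefficient is $2t_{\infty,1}t_{\infty,0}-t_{\infty,1}$, which I define to be $H_{0,1}$; it is a Hamiltonian-type constant, since there are no deformation parameters here. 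The $\lambda^{-2}$ coefficient is $t_{\infty,0}^2+\omega\delta_0$, because the $\pm t_{\infty,0}$ derivative contributions cancel.

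The main obstacle is identifying this $\lambda^{-2}$ coefficient with $t_{0,0}^2$. For that I will use the local diagonalization at the tame pole $0$. There the residue $\widetilde L_0$ is traceless with eigenvalues $\pm t_{0,0}$, so $-\det\widetilde L_0=t_{0,0}^2$. Since $\widetilde L_0=\begin{pmatrix}-t_{\infty,0}&\omega\\ \delta_0& t_{\infty,0}\end{pmatrix}$, this gives $t_{\infty,0}^2+\omega\delta_0=t_{0,0}^2$, which is exactly the needed identity. One could worry that the gauge $G$ shifts the exponents at $0$. However, the computation above never requires that: it only uses the eigenvalues of the original residue $\widetilde L_0$, which are fixed by the diagonalization Proposition. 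I will check this consistency and the sign conventions ($\beta_0$ versus $t_{\infty,0}$) carefully.
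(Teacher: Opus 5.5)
Your proposal is correct and follows essentially the paper's route: a direct gauge computation giving the oper formulas $L_{2,2}=\tr\widetilde L+\partial_\lambda\widetilde L_{1,2}/\widetilde L_{1,2}$ and $L_{2,1}=-\det\widetilde L+\partial_\lambda\widetilde L_{1,1}-\widetilde L_{1,1}\,\partial_\lambda\widetilde L_{1,2}/\widetilde L_{1,2}$, which the paper records in the Schr\"odinger-gauge subsection. The $\lambda^{-2}$ coefficient is fixed by $-\det\widetilde L_0=t_{0,0}^2$, which is exactly the identity the paper records as $\widetilde L_{2,1}=(t_{0,0}^2-t_{\infty,0}^2)/(\omega\lambda)$, and your explicit value $H_{0,1}=t_{\infty,1}(2t_{\infty,0}-1)$ agrees with the rest of the paper (it gives $H_{0,1}=-z$ for the Bessel system).
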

where $H_{0,1}$ is a function of the times and monodromies (at infinity) of the system. Note in particular that the first line of the horizontal sections is unchanged by this gauge transformation and satisfies a quantum curve equation given by 
\begin{align}
    \left(\left[\partial_{\lambda}\right]^2 -L_{2,2}(\lambda)\partial_\lambda -L_{2,1}(\lambda)\right)\psi_i=0 \, ,\qquad i\in \{1,2\}
\end{align}
The above lemma is particularly useful since from it, one could recover for a general rank $2$ meromorphic system the explicit geometric connection matrix $\widetilde{L}(\lambda)$ via an analysis of the orders of the divisor, this has been done in details in \cite{Marchal_2024}. In our case, the missing piece is the entry $ \widetilde{L}_{2,1}$ of the original gauge which we record here before introducing the symmetry reduction. In fact, using the opposite gauge transformation to recover the original geometric matrix, one has 
\begin{align}
    \widetilde{L}_{2,1} (\lambda) = \frac{t_{0,0}^2 - t_{\infty,0}^2}{\omega \lambda} 
\end{align}

We are now set to perform the symmetry reduction on this system, in particular, we impose the following involution on the horizontal sections solution of the system 

\begin{definition} [Definition of the involution symmetry]\label{DefSymmetry}We impose the symmetry condition:
\begin{align}
\widetilde{\Psi}(-\lambda)= \sigma_1 \widetilde{\Psi}(\lambda)\sigma_1, \qquad \text{ where } \qquad  \sigma_1:=\begin{pmatrix} 0&1\\1&0\end{pmatrix} \
\end{align} 
on the horizontal sections of the $0-$element of the PIV system.
\end{definition}

We want to discuss, before proceeding to the impact of the symmetry on the Lax matrix, a general framework for equivariant connections whose horizontal sections admit a certain symmetry. The form given in \autoref{DefSymmetry} is actually an example of this general construction. For this we start by the following definition.

\begin{definition}[$\mathbb{Z}_2$-equivariant connection in a trivialization] Suppose that $E=X\times\mathbb{C}^2$ is a trivializable rank $2$ vector bundle over a Riemann surface $X$ and that the $\mathbb{Z}_2$-action is given by
\begin{align}
    \sigma:X\longrightarrow X, \qquad \sigma^2= I_X,
\end{align}
on the base and by
\begin{align}
    \widetilde{\sigma}(x,v)= \bigl(\sigma(x),S v\bigr), \qquad S\in GL_2(\mathbb{C}), \qquad S^2=I_2,
\end{align}
on the total space. Let the meromorphic connection be written as
\begin{align}
    \nabla=d-L, \qquad L \in\Omega^1(X,\mathfrak{gl}_2(\mathbb{C}))\,\, \text{is a meromorphic $1-$form}.
\end{align}
Then $\nabla$ is said to be $\mathbb{Z}_2$-equivariant if and only if
\begin{align} L = S\,\sigma^*L \,S^{-1}\, \qquad \text{equivalently,}  \qquad\sigma^*L=S^{-1}LS.
\end{align}
\end{definition}
Note that in the above definition, we are assuming that the action on the base $X$ admits a lift with respect to the canonical projection, and that the points of the divisor of the connection are left invariant under this action. This definition gives for a coordinate $x$  
\begin{align}
    L(\sigma(x))\,d\sigma(x) = S\,L(x)\,S^{-1}\,dx.
\end{align}
This recovers exactly the involution that we consider above, we have however a more general definition that is easily generalizable to any rank $d$ that we hope to come back to in the future for generalizing group actions and equivariant connections. In rank $2$ the symmetry condition of \autoref{DefSymmetry} actually breaks down to two cases that are given as follows. 
\begin{lemma}
Let $S\in GL_2(\mathbb C)$ satisfy $
S^2=I_2$, 
and suppose that a fundamental solution $\Psi(\lambda)$ satisfies the 
symmetry condition
\begin{align}
    \Psi(-\lambda)=S\Psi(\lambda)S^{-1}.
\end{align}
Then, up to a constant gauge transformation and multiplication of $S$ by a nonzero scalar, there are only two possible symmetry types:
\begin{enumerate}
    \item The trivial case
    \[
    S=I_2,
    \qquad
    \Psi(-\lambda)=\Psi(\lambda),
    \qquad
    L(-\lambda)=-L(\lambda);
    \]
    \item The non-trivial case (the case considered here)
    \[
    S=\sigma_3=
    \begin{pmatrix}
    -1&0\\
    0&1
    \end{pmatrix},
    \qquad
    \Psi(-\lambda)
    =
    \sigma_3\Psi(\lambda)\sigma_3^{-1},
    \]
    and hence
    \[
    L(-\lambda)
    =
    -\sigma_3L(\lambda)\sigma_3^{-1}.
    \]
\end{enumerate}
\end{lemma}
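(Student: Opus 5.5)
The plan is to classify involutions in $GL_2(\mathbb C)$ up to conjugation and an overall scalar, and then to transfer the symmetry condition from $\Psi$ to the connection matrix. First, since $S^2=I_2$, the minimal polynomial of $S$ divides $x^2-1=(x-1)(x+1)$, which has distinct roots. Hence $S$ is diagonalizable with eigenvalues in $\{1,-1\}$. Up to conjugation by a constant $P\in GL_2(\mathbb C)$, this leaves $S\in\{I_2,\,-I_2,\,\diag(-1,1),\,\diag(1,-1)\}$. Multiplying by the scalar $-1$ identifies $-I_2$ with $I_2$, and it identifies $\diag(1,-1)$ with $\diag(-1,1)=\sigma_3$. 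Rescaling $S$ by any nonzero $c$ leaves the condition $\Psi(-\lambda)=S\Psi(\lambda)S^{-1}$ unchanged, because $(cS)\Psi(cS)^{-1}=S\Psi S^{-1}$. This is exactly why scalar multiplication is allowed in the statement.

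Second, I will check that a constant gauge transformation $\Psi\mapsto P\Psi P^{-1}$ (equivalently $L\mapsto PLP^{-1}$, with no derivative term since $P$ is constant) replaces $S$ by $PSP^{-1}$. Indeed,
\[
P\Psi(-\lambda)P^{-1}=(PSP^{-1})\,(P\Psi(\lambda)P^{-1})\,(PSP^{-1})^{-1}.
\]
The right multiplication by $P^{-1}$ only renormalizes the fundamental solution, which is allowed since the condition is imposed on a fundamental solution. So only the conjugacy class of $S$ matters, and we are reduced to the two normal forms $S=I_2$ and $S=\sigma_3$.

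Third, I will derive the condition on $L$. Set $\Phi(\lambda):=S\Psi(\lambda)S^{-1}$. Then $\partial_\lambda\Phi=SLS^{-1}\Phi$. On the other hand, $\Phi(\lambda)=\Psi(-\lambda)$, so
\[
\partial_\lambda\Phi(\lambda)=-L(-\lambda)\Psi(-\lambda)=-L(-\lambda)\Phi(\lambda).
\]
Since $\Phi$ is invertible, $L(-\lambda)=-SL(\lambda)S^{-1}$. This gives $L(-\lambda)=-L(\lambda)$ for $S=I_2$ and $L(-\lambda)=-\sigma_3L(\lambda)\sigma_3^{-1}$ for $S=\sigma_3$. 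This is also consistent with the equivariance definition, where $d\sigma=-d\lambda$.

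The main subtlety, more a point of care than an obstacle, is that the two cases are genuinely distinct: $I_2$ and $\sigma_3$ are not conjugate even after rescaling. To see this, note that $cI_2$ is scalar while $\sigma_3$ has distinct eigenvalues. A second point to record is that \autoref{DefSymmetry} uses $\sigma_1$, which has eigenvalues $\pm1$. Explicitly, with $P=\begin{pmatrix}1&-1\\1&1\end{pmatrix}$ one has $P\sigma_1P^{-1}=\diag(-1,1)=\sigma_3$ (up to ordering), so the paper's symmetry falls into the non-trivial case.
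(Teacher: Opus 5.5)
Your proposal is correct and follows the paper's own argument. Both use diagonalizability from the minimal polynomial dividing $(x-1)(x+1)$, normalize by $S\mapsto -S$ together with a constant conjugation $\Psi\mapsto P\Psi P^{-1}$ (the paper's $D^{-1}\Psi D$), and differentiate to obtain $L(-\lambda)=-SL(\lambda)S^{-1}$; your explicit derivation of that condition, the non-conjugacy check, and the verification $P\sigma_1P^{-1}=\sigma_3$ simply fill in details the paper leaves implicit.
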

\begin{proof}
Since $S^2=I_2$ the minimal polynomial of $S$ divides $(x-1)(x+1)$ which has distinct roots over $\mathbb C$. Hence $S$ is diagonalizable
with eigenvalues in $\{\pm1\}$. Moreover, the symmetry condition is unchanged under
\begin{align}
    S\mapsto cS,\qquad c\in\mathbb C^*
\end{align}
since the scalar factor cancels between $S$ and $S^{-1}$. In particular, we may replace $S$ by $-S$ and assume that one of its eigenvalues is $+1$. Thus, after a constant change of basis, there are only two possibilities: $S=I_2$ or $S=\sigma_3.$ Indeed, if $D$ is a matrix of eigenvectors such that $D^{-1}SD=V$, then the constant gauge transformation
\begin{align}
\widehat{\Psi}=D^{-1}\Psi D, \qquad \text{provides} \qquad
\widehat{\Psi}(-\lambda)
=
V\widehat{\Psi}(\lambda)V^{-1}.
\end{align}
The corresponding transformations of $L$ follow by differentiating
with respect to $\lambda$.
\end{proof}

More generally, an equivariant connection has horizontal sections satisfying $\Psi (-\lambda) = S \Psi(\lambda) C$ with a suitable choice for the constant matrix $C$, in fact the choice taken here is not arbitrary. The presence of $\sigma_1$ on the right is crucial for the local Birkhoff factorizations to hold even if it does not modify the symmetry condition for the connection matrix which is really the one that we will use in this article. Note that the symmetry is an involution and depends on the gauge in the sense that it is imposed to $\widetilde{L}$ (it would take other forms in different gauges). Note that, since the symmetry governs the transformation $\lambda \mapsto -\lambda$, it is not compatible with arbitrary M\"obius transformations acting on $\lambda$. For example, for higher elements of the PIV hierarchy, if the position of the finite pole had been taken arbitrary (and fixing $t_{\infty,2d}=0$ as in the canonical choice made in \cite{Marchal_2024}) then the symmetry condition would have needed an adaptation. As we will see below, this symmetry is compatible with the normalization at infinity of $\widetilde{L}(\lambda)$, i.e. the fact that $\widetilde{L}(\lambda)=\text{diag}(-t_{\infty,1},t_{\infty,1}) +O(\lambda^{-1})$. However, it will fix the value of $\omega$ to a non-trivial value thereby leaving no remaining degrees of freedom for the $SL_2(\mathbb{C})$ action (instead of $GL_2(\mathbb{C})$ since we have used the passage to traceless matrices). The symmetry considered in this section admits a natural geometric interpretation. At the level of the spectral parameter, the involution $\lambda \mapsto -\lambda$ preserves the pole structure of the entire PIV hierarchy and acts on the corresponding space of meromorphic connections. Requiring the horizontal sections to satisfy $\widetilde{\Psi}(-\lambda)=\sigma_1 \widetilde{\Psi}(\lambda)\sigma_1$ and select the fixed-point locus of this action. As a consequence, the associated spectral data acquire an additional symmetry and naturally descend to the quotient by $\lambda \mapsto -\lambda$. From this perspective, the Flaschka-Newell Painlev\'e II hierarchy appears as the isomonodromic dynamics induced on the invariant submanifold determined by the involution. We will now show how the $0-$element recovers the Bessel system in a similar way after using the reduction. In particular, the main result of this section is the following 

\begin{proposition}[The Bessel system] \label{BesselsysProp} The symmetry of \autoref{DefSymmetry} translates to the Lax matrix as $\widetilde{L}(-\lambda) = - \sigma_1 \widetilde{L}(\lambda) \sigma_1$ and imposes the following constraints
\begin{align}
    t_{\infty,0} = 0, \qquad \omega= t_{0,0}\,.
\end{align}
This transforms the system into the Bessel system, i.e. the horizontal sections can be expressed explicitly using Bessel functions.
\end{proposition}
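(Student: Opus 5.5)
The plan is to first turn the symmetry of \autoref{DefSymmetry} into a condition on $\widetilde{L}$. Then I would impose that condition coefficient by coefficient on the explicit normal form of $\widetilde{F}_{\mathcal{D},3}$, using the traceless normalization $t_{\infty^{(1)},k}=-t_{\infty^{(2)},k}=t_{\infty,k}$, $X_0=0$, $t_{\infty,1}=z$. After the preceding identification $\beta_0=-t_{\infty,0}$, $\gamma_0=t_{\infty,0}$, and with $\widetilde{L}_{2,1}$ computed by the inverse oper gauge, this normal form reads
\begin{align}
\widetilde{L}(\lambda)=\begin{pmatrix}-z&0\\0&z\end{pmatrix}+\frac{1}{\lambda}\begin{pmatrix}-t_{\infty,0}&\omega\\ \frac{t_{0,0}^2-t_{\infty,0}^2}{\omega}&t_{\infty,0}\end{pmatrix}.
\end{align}

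\textbf{Step 1: the condition on $\widetilde{L}$.} Differentiating $\widetilde{\Psi}(-\lambda)=\sigma_1\widetilde{\Psi}(\lambda)\sigma_1$ in $\lambda$ gives $-\widetilde{L}(-\lambda)\widetilde{\Psi}(-\lambda)=\sigma_1\widetilde{L}(\lambda)\widetilde{\Psi}(\lambda)\sigma_1$. Substituting the symmetry back and cancelling the invertible fundamental solution yields $\widetilde{L}(-\lambda)=-\sigma_1\widetilde{L}(\lambda)\sigma_1$. This is the special case $S=\sigma_1$ of the equivariance definition. By the rank-two lemma, $\sigma_1$ is conjugate to $\sigma_3$, so this is the non-trivial symmetry type.

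\textbf{Step 2: matching coefficients.} I compare the two sides order by order in $\lambda$.
\begin{itemize}
\item At order $\lambda^0$, the identity $-\sigma_1\,\diag(-z,z)\,\sigma_1=\diag(-z,z)$ holds automatically. This is the claimed compatibility with the normalization at infinity.
\item At order $\lambda^{-1}$, the sign from $\lambda\mapsto-\lambda$ cancels the overall minus, so the condition becomes $\widetilde{L}_0=\sigma_1\widetilde{L}_0\sigma_1$. Conjugation by $\sigma_1$ swaps both the diagonal entries and the off-diagonal entries.
\item The diagonal part then gives $-t_{\infty,0}=t_{\infty,0}$, hence $t_{\infty,0}=0$.
\item The off-diagonal part gives $\omega=\delta_0=t_{0,0}^2/\omega$, hence $\omega^2=t_{0,0}^2$.
\end{itemize}
To fix the sign, I would use the residual constant gauge by $\sigma_3$. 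It preserves $\diag(-z,z)$ and satisfies $\sigma_3\sigma_1\sigma_3=-\sigma_1$, so it preserves the symmetry condition while flipping $(\omega,\delta_0)\mapsto(-\omega,-\delta_0)$. This lets me choose $\omega=t_{0,0}$, and after that no freedom remains in the $SL_2$ action. The reduced matrix is $\widetilde{L}=\begin{pmatrix}-z&t_{0,0}/\lambda\\ t_{0,0}/\lambda&z\end{pmatrix}$.

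\textbf{Step 3: identification with the Bessel system.} I would use the oper-gauge lemma, which states that the first row of the horizontal sections is unchanged and satisfies $\psi''+\lambda^{-1}\psi'-(z^2+H_{0,1}\lambda^{-1}+t_{0,0}^2\lambda^{-2})\psi=0$. I need to show $H_{0,1}=0$ on the invariant locus. The approach is to redo the gauge computation with the reduced matrix, i.e.\ compute $\det\widetilde{L}$ and $\partial_\lambda\widetilde{L}_{1,1}$ and collect the $\lambda^{-1}$ term of $L_{2,1}$. I expect $H_{0,1}$ to be proportional to $z\,t_{\infty,0}$, so that it vanishes once $t_{\infty,0}=0$. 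The scalar equation is then the modified Bessel equation in the variable $z\lambda$, solved by $I_{\pm t_{0,0}}(z\lambda)$ or $K_{t_{0,0}}(z\lambda)$; this is the Bessel equation with $\alpha=t_{0,0}$ after the rotation $\lambda\mapsto i\lambda$. The second row is recovered explicitly from the first row of the system, $\widetilde{\psi}_2=\lambda t_{0,0}^{-1}(\widetilde{\psi}_1'+z\widetilde{\psi}_1)$, which by standard recurrences is again a Bessel function of neighbouring order.

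\textbf{Main obstacle.} I expect the hardest part to be the bookkeeping in Step 3. Specifically, I must verify that $H_{0,1}$ indeed vanishes. I must also check that the fundamental solution built from $I_{\pm t_{0,0}}$ can be normalized to satisfy the symmetry with $\sigma_1$ on the right, which uses $I_\nu(-x)=e^{i\pi\nu}I_\nu(x)$ together with the sign choice made for $\omega$. As a fallback, I would verify directly from the explicit Bessel solution that the symmetry holds with the right-hand factor $\sigma_1$.
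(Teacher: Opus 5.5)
Your Steps 1 and 2 match the paper's argument. The paper simply declares the choice $\omega=t_{0,0}$; your justification via the residual gauge $\pm i\sigma_3$, which preserves $\diag(-z,z)$ and the symmetry while flipping the off-diagonal signs, is a valid addition. The gap is in Step 3: the expectation that $H_{0,1}$ is proportional to $z\,t_{\infty,0}$ is false.

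In $L_{2,1}=-\det\widetilde L+\partial_\lambda\widetilde L_{1,1}-\widetilde L_{1,1}\,\partial_\lambda\widetilde L_{1,2}/\widetilde L_{1,2}$ we have $\widetilde L_{1,2}=\omega/\lambda$, so the last term equals $\widetilde L_{1,1}/\lambda$. It contributes $-z/\lambda$ independently of $t_{\infty,0}$. Altogether $H_{0,1}=z(2t_{\infty,0}-1)$, so on the invariant locus $H_{0,1}=-z\neq 0$. The cancellation you anticipated holds only for the $-\det\widetilde L$ part; the surviving term comes from the derivative terms of the $\lambda$-dependent oper gauge.

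You can confirm this directly. Substituting $\psi_2=\lambda t_{0,0}^{-1}(\psi_1'+z\psi_1)$ into the second equation of the reduced system gives
\[
\psi_1''+\lambda^{-1}\psi_1'-\Bigl(z^2-\frac{z}{\lambda}+\frac{t_{0,0}^2}{\lambda^2}\Bigr)\psi_1=0 .
\]
After $\psi_1=\lambda^{-1/2}W$ and $x=2z\lambda$, this is the Whittaker equation with $\kappa=\tfrac12$, $\mu=t_{0,0}$, which is exactly what the remark after the proposition records. It is not the modified Bessel equation, and $I_{\pm t_{0,0}}(z\lambda)$ do not solve it. Consequently your recurrence argument for the second row and the parity check via $I_\nu(-x)=e^{i\pi\nu}I_\nu(x)$ are built on the wrong functions.

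The missing idea, which is the paper's route, is to diagonalize the symmetry before reducing to scalar equations. Set $u_\pm=\psi_1\pm\psi_2$. In this $\sigma_1$-eigenbasis the residue becomes $\diag(t_{0,0},-t_{0,0})/\lambda$ and the constant part becomes $-z\sigma_1$, so
\[
u_+'-\frac{t_{0,0}}{\lambda}u_+=-z\,u_-,\qquad u_-'+\frac{t_{0,0}}{\lambda}u_-=-z\,u_+ .
\]
These decouple into $u_\pm''=\bigl(z^2+t_{0,0}(t_{0,0}\mp 1)\lambda^{-2}\bigr)u_\pm$. Writing $u_\pm=\sqrt{\lambda}\,\phi_\pm$ and $x=z\lambda$ gives the modified Bessel equation of order $t_{0,0}\mp\tfrac12$. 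Note the half-integer shift: it is also what produces the parameter $\alpha+\tfrac12$ in the later Bessel--Clifford folding. The horizontal sections are then $\psi_{1,2}=\tfrac{\sqrt{\lambda}}{2}(\phi_+\pm\phi_-)$. They are Bessel functions of orders $t_{0,0}\mp\tfrac12$, obtained only after this recombination, not of order $t_{0,0}$ in the first row alone.
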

\begin{proof}
    The symmetry condition implies that the entry $[1,1]$ of the Lax matrix is an even function, the direct computation forces $t_{\infty,0} = 0$. The remaining conditions are forced by the swapping with the minus signs of the off-diagonal entries and a choice has been taken for $\omega^2 = t_{0,0}^2$. This leads to the following matrix in the original gauge
    \begin{align}
        \widetilde{L}^{\text{red}}(\lambda) =& \begin{pmatrix}
            -t_{\infty,1} & \frac{t_{0,0}}{\lambda} \\
            \frac{t_{0,0}}{\lambda} & t_{\infty,1}
        \end{pmatrix} \qquad \text{which after a rescaling of the form} \quad \psi_1 \pm \psi_2 = \sqrt{\lambda} \phi_\pm  \nonumber  \\
      &  \left(\left[\partial_{x}\right]^2 +  \frac{1}{x}   \partial_x  + \left(- 1 - \frac{\left( t_{0,0} \mp\frac{1}{2}  \right)^2 }{x^2} \right)\right)\phi_\pm=0 \, ,
    \end{align}
    where the variable $x= t_{\infty,1} \lambda$ was introduced. This is the modified Bessel equation satisfied by the horizontal sections after setting $c=t_{\infty,1}$ (a constant parameter) and the monodromy controlling the solution in the neighborhood of the finite pole $t_{0,0}=\alpha$. By pulling back this oper system to the original gauge (and rescaling), one leaves the first line of the horizontal sections invariant and thus preserves the two Bessel functions solutions of the Bessel differential equation. 
\end{proof}
\begin{remark}
    The above system is also naturally related to the Whittaker equation, the system is constant-gauge equivalent to a modified Bessel system through the change of variables and rescaling introduced in the proof. Its scalar reductions in the original basis are Whittaker equations with parameters \(\kappa=\pm\tfrac12\) and \(\mu=\alpha\).
\end{remark}

The Bessel equation is known to be related to a “folded" differential equation, namely the Bessel-Clifford equation as discussed in the introduction. We will make this more precise in geometric terms in the following subsection when we define the folding operation.

\subsection{Folding and the Bessel-Clifford system} \label{sec2.2}
We will start by recalling the classical example of the folding that is mostly understood in the literature, for this, we consider the Flaschka-Newell Lax representation for the Painlev\'e II equation, which is the first element of the hierarchy. The system is given by 
\begin{align}
    \frac{\partial \Psi_{FN}}{\partial \lambda} =& L_{FN} \Psi, \qquad \text{and} \qquad \frac{\partial \Psi_{FN}}{\partial x} = A_{FN} \Psi \nonumber\\
    L_{FN} (\lambda,x)  =& -i \left( 4 \lambda^2 + x + 2 u^2 \right) \sigma_3 + \left( 4 \lambda u + \frac{\alpha}{\lambda} \right) \sigma_1 - 2 v \sigma_2 \nonumber \\
    A_{FN} (\lambda,x) = & - i \lambda \sigma_3 +  u \sigma_1
 \end{align}
 here the matrices $\sigma_i$ are the usual Pauli matrices. The compatibility of the above system (commutation of the flows) provides the Painlev\'e II equation as its condition. Moreover, it is clear that in the generic setup, this corresponds to an irregular pole at $\{\infty\}$ of order $r_\infty=4$ and a tame pole at the origin of order $r_0=1$. However, this exact case in the generic setup provides the second element of the PIV hierarchy. This observation has been made explicit in \cite{alameddine2026symmetryreductionpainleveiv} and provided a reduction of the system to cover the Painlev\'e II hierarchy as an invariant locus. In particular, the symmetry condition is exactly the symmetry imposed in \autoref{DefSymmetry} and has an invariant locus providing the Flaschka-Newell Painlev\'e II representation (and this translates to the whole hierarchy). This symmetry condition with respect to $\lambda \mapsto - \lambda$ implies the existence of a Lax pair on the “folded" complex $\lambda-$plane. Indeed, introducing the folded coordinate and the gauge transformation 
 \begin{align}
     G(\lambda) := \begin{pmatrix}
         \lambda^{\frac{1}{2}} & 0 \\
         0 & \lambda^{-\frac{1}{2}}
     \end{pmatrix} S      \qquad \text{with} \qquad  \xi := \lambda^2
 \end{align}
where the matrix 
\begin{align}
    S: = \frac{1}{\sqrt{2}} \begin{pmatrix}
        1 & -1 \\ 1 & 1
    \end{pmatrix}
\end{align}
is the matrix diagonalizing the symmetry condition from $\sigma_1 \to \sigma_3$. Our transformed system now has the following Lax pair
 \begin{align} \label{newsys}
     \tilde{L} = \begin{pmatrix}
         2 u + \frac{\alpha + \frac{1}{2}}{2 \xi} & 2 i \xi + i q \\2 i + \frac{ir}{\xi } & - \left( 2 u + \frac{\alpha + \frac{1}{2}}{2 \xi} \right) \end{pmatrix}, \qquad \tilde{A} = \begin{pmatrix}
             u & i \xi \\ i & -u
         \end{pmatrix}
 \end{align}
 The additional parameters are given by $ q=v+ u ^2 + \frac{x}{2}$ and $r= - v + u^2+ \frac{x}{2}$. Now evaluating the compatibility of the new system provides the following known result. 
\begin{lemma} The compatibility condition evaluated on the system \eqref{newsys} provides the following set of equations 
\begin{align}
    \partial_x u = \frac{q-r}{2}, \qquad \partial_x q = 2 u q - \alpha + \frac{1}{2}\qquad \partial_x r = - 2 u r + \alpha + \frac{1}{2}
\end{align}
    in particular, the system implies $q+r - 2 u^2 -x$ is actually vanishing and that eliminating $q$ and $r$ provides the equation $\partial_x^2 u = 2 u^3 + xu - \alpha$ which is the second Painlev\'e equation. Moreover, eliminating $u$ and $q$ (up to some rescaling) provides the equation number XXXIV in the list of Painlev\'e and Gambier: 
    \begin{align}
        \partial_x^2 r  = \frac{\left( \partial_x r \right)^2}{2r} + 2 r^2 - xr - \frac{\left(   \alpha+ \frac{1}{2}  \right)^2}{2r}, \qquad u = - \frac{\partial_x r }{2r} + \frac{\alpha+ \frac{1}{2}}{2 r}
    \end{align}
    or a similar equation for $q$ and $u$ if one eliminates $u$ and $r$. 
\end{lemma}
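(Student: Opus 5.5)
The plan is to write out the zero-curvature condition for the pair \eqref{newsys} and compare coefficients of powers of $\xi$. With $\partial_\xi\Psi=\tilde L\Psi$ and $\partial_x\Psi=\tilde A\Psi$, compatibility reads
\begin{align}
\partial_x\tilde L-\partial_\xi\tilde A+[\tilde L,\tilde A]=0 .
\end{align}

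\textbf{Setting up the matrix identity.} Abbreviate $c:=\tfrac12(\alpha+\tfrac12)$. Write $\tilde L=\begin{pmatrix}a&b\\ d&-a\end{pmatrix}$ and $\tilde A=\begin{pmatrix}u&e\\ f&-u\end{pmatrix}$, where
\begin{align}
a=2u+\tfrac{c}{\xi},\qquad b=2i\xi+iq,\qquad d=2i+\tfrac{ir}{\xi},\qquad e=i\xi,\qquad f=i .
\end{align}
For traceless $2\times 2$ matrices the commutator is
\begin{align}
[\tilde L,\tilde A]=\begin{pmatrix}bf-de & 2(ae-bu)\\ 2(du-af) & de-bf\end{pmatrix}.
\end{align}
The only nonzero entry of $\partial_\xi\tilde A$ is $i$ in position $(1,2)$.

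\textbf{The three first-order equations.} Each entry is a short computation.
\begin{itemize}
\item \emph{Entry $(1,1)$.} The $\xi$-terms cancel: $bf-de=(-2\xi-q)-(-2\xi-r)=r-q$. The entry becomes $2u'+r-q=0$, which is the first equation.
\item \emph{Entry $(1,2)$.} The terms $\pm 2iu\xi$ cancel. What remains is $i(q'-1+2c-2qu)=0$, i.e. $q'=2uq-\alpha+\tfrac12$.
\item \emph{Entry $(2,1)$.} Only the $\xi^{-1}$ terms survive, giving $r'=-2ur+2c=-2ur+\alpha+\tfrac12$.
\end{itemize}
The entry $(2,2)$ is minus entry $(1,1)$ and gives nothing new. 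It is worth checking that every power of $\xi$ has been matched, since this is the one place where a sign slip in the gauge transformation would show up.

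\textbf{Consequences.} Adding the two equations for $q$ and $r$ and substituting $q-r=2u'$ gives
\begin{align}
(q+r)'=2u(q-r)+1=4uu'+1=(2u^2+x)' .
\end{align}
So $q+r-2u^2-x$ is constant, and it vanishes identically by the definitions $q=v+u^2+\tfrac x2$ and $r=-v+u^2+\tfrac x2$. Differentiating the first equation then gives
\begin{align}
u''=\tfrac12(q'-r')=u(q+r)-\alpha=2u^3+xu-\alpha ,
\end{align}
which is Painlev\'e II.

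For equation XXXIV, solve the $r$-equation for $u$:
\begin{align}
u=-\frac{r'}{2r}+\frac{c}{r}.
\end{align}
Substitute this and $q=2u^2+x-r$ into $2u'=q-r$, i.e. $2u'=2u^2+x-2r$. Clearing denominators gives the stated second-order equation for $r$, up to the rescaling mentioned in the statement. The equation for $q$ follows symmetrically by solving the $q$-equation for $u$ instead.

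\textbf{Main obstacle.} The only step requiring care is this final elimination. The terms in $(r')^2/r$ and $c^2/r$ must combine exactly into the form $\frac{(\partial_x r)^2}{2r}-\frac{(\alpha+\frac12)^2}{2r}$, and the constants and signs (in particular the factor $2c=\alpha+\tfrac12$) must be tracked consistently with the normalization used in the statement. Everything before it is a direct expansion.
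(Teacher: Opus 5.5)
Your proposal is correct and is the computation the paper has in mind: the paper gives no written proof, only the remark that the result comes from evaluating the compatibility of \eqref{newsys}. Entrywise, the zero-curvature condition $\partial_x\tilde L-\partial_\xi\tilde A+[\tilde L,\tilde A]=0$ gives the three first-order equations. These show that $q+r-2u^2-x$ is constant, the definitions of $q,r$ make it vanish, and Painlev\'e II follows. The final elimination you flag as the obstacle closes exactly: with $u=(2c-r')/(2r)$ one gets $r''=2r^2-xr-\frac{(2c-r')(2c+r')}{2r}$, which is precisely the stated XXXIV equation because $4c^2=(\alpha+\frac12)^2$, so no rescaling is actually needed.
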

In fact, the system \eqref{newsys} is used as the natural non-symmetric version of the Flaschka-Newell Lax pair for Painlev\'e II, using the symmetric version or the folded one remains dependent on the nature of the problem and the structure of the Lax system needed. Before proceeding, we make the following definition.

\begin{definition}[Spectral curve] The spectral curve attached to a (traceless) connection matrix $L(\lambda)$ is the affine plane curve defined by the characteristic equation 
\begin{align}
    \Gamma(y, \lambda) := \left\{ \det\left( y - L(\lambda) \right)=0 \right\} = \left\{ y^2 = - \det\left( L(\lambda) \right) \right\}
\end{align}
This characteristic equation defines the eigenvalues as functions on the corresponding $2$-sheeted Riemann surface.     
\end{definition}

\begin{remark}
    Note that saying that the Flaschka-Newell representation is the genus $1$ element of the hierarchy is actually not accurate, since the spectral curve in this setting is given by 
    \begin{align} \label{FlaschkaNewellSpectralCurve}
        y^2 = - 16 \lambda^4 - 8 x \lambda^2 - \left( x + 2u^2 \right)^2 + 4 v^2 +  8 \alpha u + \frac{\alpha^2}{\lambda^2}
    \end{align}
    it is in particular of genus $2$ if we eliminate degenerations despite the fact that its Lax system corresponds to a representation of a Painlev\'e equation (usually these are known as the genus $1$ examples having a symplectic leaf of dimension $2$). The main difference in these cases is that the symmetry of the Lax system under the involution imposes the reduction of the dimension of the symplectic leaf thus recovering the natural expected dimension (and genus). More generally, this is the case of all the members of the Flaschka-Newell hierarchy where the spectral curve is given by 
    \begin{align}
        y^2  = -4^{2g} \lambda^{4g} + W_{2g-1} (\lambda^2) + \frac{\alpha_g^2}{\lambda^2} \qquad \qquad
        \text{with a degree $2g-1$ polynomial $W$,}
    \end{align}
    which is a genus $2g$ spectral curve representing the $g$ member of the hierarchy.  
\end{remark}

Our strategy for the folding is similar to the above construction of the folded Lax pair, in our case, we will recover the Bessel-Clifford system which provides the realization of the folding as a relation between the Bessel functions and the Bessel-Clifford functions (or equivalently Kummer's ${}_0F_1$ function). This is the main result of the section.

\begin{theorem} \label{TheoFolding}
    Consider the Bessel system defined and normalized on the $\lambda-$sphere given by
    \begin{align} \label{eq2-30}
        \partial_\lambda \widetilde{\Psi} = \widetilde{L}(\lambda)\widetilde{\Psi} \qquad \text{with} \qquad \widetilde{L}(\lambda) = \begin{pmatrix}
            - z & \frac{\alpha}{\lambda} \\ \frac{\alpha}{\lambda} & z
        \end{pmatrix}
    \end{align}
   It admits a connection having a $\mathbb{Z}_2-$equivariant structure covering the involution $(\lambda, \widetilde{\Psi}) \to (- \lambda, \sigma_1 \widetilde{\Psi} \sigma_1)$, and after diagonalizing the involution, descends into a rational connection on the folded $\xi-$sphere where $\xi= \lambda^2$. Moreover, the quotient system admits solutions that are scalar-equivalent to the solutions of the Bessel-Clifford, or confluent hypergeometric limit, or kummer's second, equation 
    \begin{align}
         u f''(u) +  \beta f'(u) - f(u) =0
    \end{align}
    Thus, the Bessel-Clifford connection arises as a scaled reduction of the $\mathbb{Z}_2-$folded connection associated with the original symmetric Bessel connection.
\end{theorem}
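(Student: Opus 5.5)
We mimic the Flaschka--Newell folding of \autoref{sec2.2}, applied to the reduced Bessel matrix of \autoref{BesselsysProp}.

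\textbf{Equivariance.} We first check that \eqref{eq2-30} satisfies $\widetilde{L}(-\lambda)=-\sigma_1\widetilde{L}(\lambda)\sigma_1$. Conjugation by $\sigma_1$ swaps the diagonal entries $-z\leftrightarrow z$ and fixes the off-diagonal entries $\alpha/\lambda$. After the overall sign, this is exactly $\widetilde{L}(-\lambda)$. Hence $d-\widetilde{L}\,d\lambda$ is $\mathbb{Z}_2$-equivariant in the sense of the definition above, with $S=\sigma_1$. The symmetry is then transported to $\sigma_3$ by the constant matrix $S=\frac{1}{\sqrt2}\begin{pmatrix}1&-1\\1&1\end{pmatrix}$, as in the lemma classifying rank-two symmetry types. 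Conjugating, $\widehat{L}=S^{-1}\widetilde{L}S$ becomes
\begin{align}
\widehat{L}(\lambda)=\begin{pmatrix}\frac{\alpha}{\lambda} & z\\ z & -\frac{\alpha}{\lambda}\end{pmatrix}.
\end{align}
Its diagonal entries are odd in $\lambda$ and its off-diagonal entries are even, which is precisely the $\sigma_3$-type condition $\widehat{L}(-\lambda)=-\sigma_3\widehat{L}(\lambda)\sigma_3$.

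\textbf{Descent.} Next we apply the gauge $\Phi=\diag(\lambda^{1/2},\lambda^{-1/2})\widehat{\Psi}$ and change variable to $\xi=\lambda^2$, using $\partial_\lambda=2\lambda\partial_\xi$. The diagonal gauge contributes $\frac{1}{2\lambda}\sigma_3$ and multiplies the off-diagonal entries by $\lambda^{\pm1}$. This gives
\begin{align}
\partial_\xi\Phi=\frac{1}{2\lambda}\begin{pmatrix}\frac{\alpha+\frac12}{\lambda} & z\lambda\\ \frac{z}{\lambda} & -\frac{\alpha+\frac12}{\lambda}\end{pmatrix}\Phi=\begin{pmatrix}\frac{\alpha+\frac12}{2\xi} & \frac{z}{2}\\ \frac{z}{2\xi} & -\frac{\alpha+\frac12}{2\xi}\end{pmatrix}\Phi.
\end{align}
Every entry is now a rational function of $\xi$, so the connection descends to the $\xi$-sphere. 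This is the analogue of \eqref{newsys} with $u=0$. At $\xi=\infty$ the leading order is the nilpotent matrix $\frac z2 E_{12}$, which is non-diagonalizable; the pole is therefore ramified, in agreement with the general picture described in \autoref{sec2.2}.

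\textbf{Scalar reduction.} Write $\Phi=(\phi_1,\phi_2)^T$ and set $a=\frac{\alpha+1/2}{2}$. The first row gives $\phi_2=\frac{2}{z}(\phi_1'-\frac{a}{\xi}\phi_1)$. Substituting into the second row yields
\begin{align}
\xi\phi_1''+\phi_1'-\Big(\frac{a^2-a}{\xi}+\frac{z^2}{4}\Big)\phi_1=0.
\end{align}
We then set $\phi_1=\xi^{\mu}f(u)$ with $u=\frac{z^2}{4}\xi$, choosing $\mu$ as a root of $\mu^2=a^2-a$ so that the $1/\xi$ term cancels. This leaves $uf''+(2\mu+1)f'-f=0$, which is the Bessel--Clifford equation with $\beta=2\mu+1$. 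Consistency with the introduction, $\beta-1=\pm\alpha$, is checked by comparing with \autoref{BesselsysProp} and the relation $y(x)=x^{a-1}f(-x^2/4)$.

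\textbf{Main obstacle.} The step requiring most care is the bookkeeping of the half-integer shifts. The $\alpha+\frac12$ produced by the diagonal gauge, together with the sign conventions ($z^2$ versus $-z^2$, i.e.\ the modified versus the ordinary Bessel function), must produce the exact indicial root $\mu$ and the stated $\beta$. It must also be verified that the multivalued factors $\lambda^{\pm1/2}$ and $\xi^{\mu}$ are only scalar or diagonal twists. This is what justifies the phrase ``scalar-equivalent'' rather than a rational gauge equivalence.
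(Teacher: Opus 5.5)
Your route is the paper's: the equivariance check, constant conjugation to the $\sigma_3$-type symmetry, the gauge $\diag(\lambda^{1/2},\lambda^{-1/2})$ with $\xi=\lambda^2$, scalar reduction, a power substitution and the rescaling $u=\frac{z^2}{4}\xi$. Your descended matrix is computed correctly. The error is in the scalar reduction: the equation $\xi\phi_1''+\phi_1'-\bigl(\frac{a^2-a}{\xi}+\frac{z^2}{4}\bigr)\phi_1=0$ does not follow from your system. Eliminating $\phi_2$ from a $2\times 2$ system gives $\phi_1''=\bigl(\tr L+\frac{L_{12}'}{L_{12}}\bigr)\phi_1'+(\dots)\phi_1$. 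In your gauge $L$ is traceless and $L_{12}=\frac z2$ is constant, so the term $-\frac{a}{\xi}\phi_1'$ appears on both sides and cancels. The correct equation is
\[
\xi\phi_1''-\Big(\frac{a(a-1)}{\xi}+\frac{z^2}{4}\Big)\phi_1=0 .
\]
The first-derivative term in the paper's scalar equation $\xi y_1''+y_1'-\dots=0$ comes from the $1/\xi$ sitting on the $(1,2)$ entry there, which gives $L_{12}'/L_{12}=-1/\xi$. Your relations $\mu^2=a^2-a$ and $\beta=2\mu+1$ are the right bookkeeping only for that other gauge, not for yours.

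The error propagates. With $\phi_1=\xi^{\mu}f$, the indicial condition is $\mu(\mu-1)=a(a-1)$, i.e.\ $\mu\in\{a,1-a\}$. The reduced equation is $uf''+2\mu f'-f=0$, so $\beta=2\mu=\alpha+\frac12$ (or $\frac32-\alpha$). One solution is therefore $\phi_1=\xi^{\frac{\alpha}{2}+\frac14}\,{}_0F_1\bigl(;\alpha+\tfrac12;\tfrac{z^2\xi}{4}\bigr)$. This agrees with the paper's ${}_0F_1(;\alpha+\frac12;\cdot)$; the prefactor differs from $\xi^{\frac{\alpha}{2}-\frac14}$ only by the gauge convention. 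It also agrees with $\phi_1\propto\lambda\,I_{\alpha-\frac12}(z\lambda)$. Your version instead yields $\beta=1\pm\sqrt{(\alpha-\frac12)^2-1}$, which is wrong.

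Your planned consistency check $\beta-1=\pm\alpha$ is itself off by exactly the half-shift you flagged. By \autoref{BesselsysProp}, $\psi_1+\psi_2=\sqrt{\lambda}\,\phi_+$ solves a modified Bessel equation of order $\alpha-\frac12$. The correct check is therefore $\beta-1=\pm(\alpha-\frac12)$, which the corrected computation satisfies and yours does not.
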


\begin{proof}
    First, it is clear that the connection matrix satisfies $\widetilde{L}(-\lambda) = - \sigma_1 \widetilde{L}(\lambda) \sigma_1$, hence the connection $1-$form itself is invariant under the combined action $(\lambda, \Psi) \mapsto (-\lambda, \sigma_1 \Psi \sigma_1)$. Therefore, the connection is $\mathbb{Z}_2-$equivariant, this establishes the existence of the folding of the $\lambda-$plane. To establish the folding, one needs to diagonalize the symmetry by using the fact that $S \sigma_1 S^{-1} = \sigma_3$, indeed, one has
    \begin{align} 
        \check{L}(\lambda) = S\widetilde{L}(\lambda)S^{-1}, \qquad \text{which provides} \qquad \check{L}(\lambda) = \begin{pmatrix} \frac{\alpha}{\lambda} & -z \\ - z & - \frac{\alpha}{\lambda}
        \end{pmatrix} 
    \end{align}
    The main feature of the above gauge is that the diagonal entries are now odd functions and the off-diagonal entries are even with the symmetry that does not swap entries. Introducing now the gauge transformation along with the folded coordinate 
    \begin{align}
        G(\lambda) :=  \begin{pmatrix}
            \lambda^{\frac{1}{2}} & 0 \\ 0& \lambda^{-\frac{1}{2}}
        \end{pmatrix}, \qquad \text{with} \qquad \xi= \lambda^2
    \end{align}
    where the square roots are understood locally on the punctured $\lambda-$plane, their role is to produce a rational connection after passing to the quotient. In fact, the system after this becomes 
    \begin{align}
     \partial_\xi Y(\xi) =  L^{\text{fold}}(\xi) Y(\xi)   \qquad \text{with} \qquad L^{\text{fold}}(\xi) = \frac{1}{2} \begin{pmatrix}
            \frac{\alpha-\frac{1}{2}}{\xi} & -\frac{z}{\xi} \\ - z & -   \frac{\alpha-\frac{1}{2}}{\xi} 
        \end{pmatrix}
    \end{align}
    where the connection $1-$form transforms under the change of coordinates as $ 2 \lambda\check{L}(\lambda) d \lambda = L^{\text{fold}}(\xi) d \xi$. We will now turn our attention to prove that the above system admits solutions that are given by kummer's ${}_0F_1$ function, which effectively means that our system is the Bessel-Clifford system. There are at least two equivalent ways to proceed, either using the oper gauge (and thus the quantum curve) in analogy with the previous Bessel system, or by scalar reduction of the above system. We will use the latter method, in fact one has by using the two equations of the system 
    \begin{align} \label{Bessel-Cliffordcurvequantum}
        \begin{cases}
            y'_1 = \frac{\alpha- \frac{1}{2}}{2 \xi} y_1 - \frac{z}{2 \xi} y_2 \\
            y_2' =  - \frac{z}{2 } y_1- \frac{\alpha- \frac{1}{2}}{2 \xi} y_2
        \end{cases} \qquad \text{providing} \qquad \xi y''_1 + y'_1 - \left( \frac{z^2}{4} + \frac{\left( \alpha-\frac{1}{2} \right)^2}{4 \xi} \right) y_1=0 
    \end{align}
    Then, we introduce the substitution $y_1(\xi) = \xi^{\frac{\alpha}{2} - \frac{1}{4}} f(\xi)$ and putting this in the equation and dividing by $\xi^{\frac{\alpha}{2}- \frac{1}{4}}$, one obtains the equation 
    \begin{align}
        \xi f'' + \left( \alpha+ \frac{1}{2} \right) f' - \frac{z^2}{4} f=0
    \end{align}
    The final step is the introduction of the rescaling variable $u=\frac{z^2}{4} \xi$, which transforms the above equation into
    \begin{align}
        u f''(u) +  \left(\alpha + \frac{1}{2} \right) f'(u) - f(u) =0
    \end{align}
    which is nothing but the Bessel-Clifford equation, in other words, a solution is given by Kummer's function
    \begin{align} 
        f(u) = {}_0 F_1 \left( ;\alpha + \frac{1}{2} ; u \right) \qquad \text{providing} \qquad y_1 (\xi) = \xi^{\frac{\alpha}{2} - \frac{1}{4}} {}_0 F_1 \left( ;\alpha + \frac{1}{2} ; \frac{z^2 \xi}{4} \right)
    \end{align}
Thus completing the proof. 
\end{proof}
\begin{remark}
    Realize that after the folding, the leading order of the connection matrix is non-diagonalizable, in analogy with the Flaschka-Newell case where one descends into a connection whose leading order admits a triangular form. This means that the connection in this case is ramified (seen directly from the exponential behavior as well), indeed, it is very well known that the solutions to the Bessel-Clifford equation admit a ramification and a fractional exponent $x^{\frac{1}{2}}$, this justifies the interpretation of the symmetry reduction from the even Painlev\'e IV hierarchy to the Flaschka-Newell hierarchy as a twisting procedure after the introduction of a folding operation on the invariant locus. This could be equivalently seen from the irregular type of the Bessel system when introducing the folding variable.
\end{remark}

\begin{remark}
    One has the residue at the tame pole admitting two eigenvalues given by $\rho_1 = \frac{\alpha}{2} - \frac{1}{4}$ and $\rho_2 = - \frac{\alpha}{2} + \frac{1}{4}$, the first exponent is exactly the exponent of the scalar solution $\xi^{\frac{\alpha}{2}- \frac{1}{4}}$, this provides a connection-related explanation for the factor in front of the solution. The result is a geometric recovery of the well known passage from solutions of the Bessel-Clifford equation to the solutions of the Bessel equation through a double cover. We note here that the gauge transformation cannot cause the additional half integer twists in the monodromy of the source, in fact, as discussed in the previous subsection when establishing the relation to the Bessel equations, one needs to introduce a half-integer twist in $\lambda$, this is the main factor causing the shift of the monodromy.
\end{remark}
The proof of \autoref{TheoFolding} contains a certain rescaling of the solutions, in fact, by introducing the rescaling matrix 
\begin{align}
    J (\xi)= \begin{pmatrix}
        \xi^{\frac{\alpha}{2} - \frac{1}{4}} & 0 \\ 0 & \xi^{-\left(\frac{\alpha}{2} - \frac{1}{4} \right) }
    \end{pmatrix} 
\end{align}
One obtains a scaled Lax matrix giving the spectral curve
\begin{align}
    \widetilde{L}^{\text{fold}}(\xi) = \frac{1}{2} \begin{pmatrix} 0 &  - z \xi^{ -\alpha - \frac{1}{2}}  \\ - z \xi^{ \alpha-\frac{1}{2}} & 0 
        \end{pmatrix} \qquad \text{providing the spectral curve} \qquad \xi y^2 = \frac{z^2}{4}
\end{align}
which is referred to in the TR community as the Bessel spectral curve \cite{DoNorbury2018} for a certain constant value of $z$ (nevertheless, it is in direct relation with the Bessel-Clifford system). This spectral curve for another value of the trivial times $z=2$ was also studied in \cite{DoNorbury18} where an irregular version of TR has been introduced and provided polynomial volumes on a certain moduli space in analogy with the volumes considered by M. Kontsevich on the moduli space of curves \cite{Kontsevich1992}. We stress here that the different gauge is not to be thought of as an alternative system that could be used for TR without a price to pay. In fact, a similar gauge transformation for the matrix $\check{L}(\lambda)$ provides a trivial spectral curve $\mu^2 =z^2$. This is due to the fact that the spectral data are not invariant under gauge transformations. If one needs to define a folding operation for TR, one needs to consider the original geometric gauge, and not a different (perhaps more suitable or trivial) gauge.  

\medskip 

Let us now justify our claims in the introduction by considering the passage to the Schr\"odinger gauge for meromorphic differential systems. This ultimately proves the passage from the Kummer to Bessel curves.  

\subsection{The Schr\"odinger gauge} \label{SubsecSchrod}

We will first show how one can pass from the oper gauge to the Schr\"odinger gauge for a general rank $2$ meromorphic differential system. We will then restrict the discussion to the cases considered in the previous two sections. Suppose we have a rank $2$ system in its oper form satisfied by the horizontal sections, this means that the horizontal sections are solutions of the quantum curve
\begin{align}
    \left(\left[\partial_{\lambda}\right]^2 -L_{2,2}(\lambda)\partial_\lambda -L_{2,1}(\lambda)\right)\psi_i=0 \, ,\qquad i\in \{1,2\}
\end{align}
The coefficients of this equation are expressed in function of the original (geometric gauge) via the following expressions 
\begin{align}
     L_{2,1}=&-\det \widetilde{L}+\partial_\lambda\widetilde{L}_{1,1}-\widetilde{L}_{1,1}\frac{\partial_\lambda\widetilde{L}_{1,2}}{\widetilde{L}_{1,2}}, \qquad 
L_{2,2}=  \tr \widetilde{L} +\frac{\partial_\lambda\widetilde{L}_{1,2}}{\widetilde{L}_{1,2}}.
\end{align}
For general higher genus curves $g \geq 1$, this is the gauge that allows the definition of the apparent singularities as the zeroes of the entry $\widetilde{L}_{1,2}$. By applying an alternative gauge transformation
\begin{align}
    \phi_1= G\psi_1, \qquad \text{with} \qquad G:= \exp\left( - \frac{1}{2} \int L_{2,2} (\lambda) d \lambda   \right)
\end{align}
the quantum curve equation transforms into a Schr\"odinger type equation satisfied by the horizontal sections 
\begin{align}
    \left( \left[\partial_{\lambda}\right]^2 - Q(\lambda) \right) \phi_i = 0 \qquad \text{with} \qquad Q(\lambda) := L_{2,1}(\lambda) + \frac{L_{2,2}(\lambda)^2}{4} - \frac{\partial_\lambda L_{2,2}(\lambda) }{2}
\end{align}
\begin{remark}
    In the case of a genus $g\geq 1$ one has an additional deformation equation that changes in the Sch\"odinger gauge, in particular, if one has for an isomonodromic deformation direction on the tangent space of times the equation 
    \begin{align}
         \frac{ \partial \Psi}{\partial t_i} = A_i(\lambda,\mathbf{t}) \Psi
    \end{align}
    We will take this gauge to be the oper gauge, then the gauge transformation providing the Schr\"odinger gauge transforms the above system to the equation 
    \begin{align}
    \frac{ \partial \phi_1}{\partial t_i} = [ A_i ]_{1,2}   \frac{ \partial \phi_1}{\partial \lambda} + \left( [ A_i ]_{1,1} + \frac{ [ A_i ]_{1,2} L_{2,2} }{2} + \frac{1}{G} \frac{\partial G}{\partial t_i} \right) \phi_1
    \end{align}
    By taking an appropriate normalization for the gauge matrix $G$ we can transform the above equation into the form 
    \begin{align}
        \frac{ \partial \phi_1}{\partial t_i} =  \left( \frac{ [ A_i ]_{1,2} }{ \widetilde{L}_{1,2}}  \frac{\partial}{\partial \lambda}  -\frac{1}{2}  \left( \frac{\partial}{\partial \lambda} \frac{ [ A_i ]_{1,2} }{ \widetilde{L}_{1,2}}  \right) \right) \phi_1. 
    \end{align}
    This is the analogous Schr\"odinger gauge deformation equation.
\end{remark}
This offers a direct way to obtain the Schr\"odinger gauges of both Bessel and Bessel-Clifford systems by applying the above gauge transformation, we get the following result in relation with the results of \cite{Iwaki2019}. 

\begin{lemma}
    The combination of the symmetry and the folding operations realized from the Bessel to the Bessel-Clifford systems reproduces the same passage of the confluence diagram from Kummer to Bessel of spectral curves obtained as classical limits of the confluent family of the Gauss hypergeometric equations. This recovery of the spectral curves is up to an identification of the monodromies and parameters of the potentials of the respective curves of Kummer and Bessel (as in \cite{Iwaki2019}). 
    \end{lemma}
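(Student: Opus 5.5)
The approach is a direct computation: bring both systems to the Schr\"odinger gauge with the formulas just established, then compare the resulting potentials with the classical limits of the Kummer and Bessel equations in the confluence diagram of \cite{Iwaki2019}.

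\textbf{Step 1: the Bessel side.} For the reduced matrix $\widetilde{L}^{\text{red}}(\lambda)$ of \autoref{BesselsysProp}, with $t_{\infty,0}=0$ and $\omega=t_{0,0}=\alpha$, $t_{\infty,1}=z$, we have $\widetilde{L}_{1,2}=\alpha/\lambda$ and $\tr\widetilde{L}=0$. This gives $L_{2,2}=-1/\lambda$ and $L_{2,1}=z^2+\alpha^2/\lambda^2+z/\lambda$. The $1/\lambda$ term comes from $-\widetilde{L}_{1,1}\partial_\lambda\widetilde{L}_{1,2}/\widetilde{L}_{1,2}$, which is consistent with \eqref{entriesoper}. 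Then
\begin{align}
Q(\lambda)=z^2+\frac{z}{\lambda}+\frac{\alpha^2-\frac14}{\lambda^2}.
\end{align}
After inserting $\hbar$ in the usual way, the classical limit gives $y^2=z^2+z/\lambda+\alpha^2/\lambda^2$, with the $-\tfrac14\hbar^2$ correction discarded. This has exactly the shape of the Whittaker/Kummer potential of \cite{Iwaki2019}: a double pole at $\infty$ with nonzero leading coefficient, a simple-pole term, and a regular singular double pole at $0$. I would identify the parameters as follows. $z^2$ is the irregular coefficient at $\infty$ (normalized to $\tfrac14$ by the rescaling $x=2z\lambda$). The $1/\lambda$ coefficient is the Whittaker $\kappa$, which is consistent with the Remark after \autoref{BesselsysProp} ($\kappa=\pm\tfrac12$). $\alpha$ is the local monodromy $\mu$ at $0$.

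\textbf{Step 2: the folded side.} For $L^{\text{fold}}(\xi)$ from the proof of \autoref{TheoFolding}, the trace vanishes and $\widetilde{L}_{1,2}=-z/(2\xi)$. So $L_{2,2}=-1/\xi$, and
\begin{align}
L_{2,1}=-\det L^{\text{fold}}+\partial_\xi L^{\text{fold}}_{1,1}+\frac{L^{\text{fold}}_{1,1}}{\xi},
\end{align}
where the last two terms cancel. This yields $L_{2,1}=\frac{z^2}{4\xi}+\frac{(\alpha-\frac12)^2}{4\xi^2}$, in agreement with \eqref{Bessel-Cliffordcurvequantum}. Therefore
\begin{align}
Q(\xi)=\frac{z^2}{4\xi}+\frac{(\alpha-\frac12)^2-1}{4\xi^2}.
\end{align}
Its classical limit is $y^2=\frac{z^2}{4\xi}+\frac{\alpha'^2}{4\xi^2}$. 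This is precisely the Bessel curve of \cite{Iwaki2019}: a ramified (half-order) pole at $\infty$ and a regular singularity at $0$.

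\textbf{Step 3: the comparison.} The passage from the Step 1 potential to the Step 2 potential should be checked to be the Kummer-to-Bessel arrow of the confluence diagram. Under the symmetry of \autoref{DefSymmetry}, the scalar equation for $\phi_\pm$ depends on $t_{0,0}\mp\tfrac12$, so the $1/\lambda$ term is effectively frozen by the symmetry. The folding $\xi=\lambda^2$ then turns the order-two irregular pole of $Q(\lambda)d\lambda^2$ into $Q\,d\lambda^2=\frac{Q}{4\xi}d\xi^2$. This is a differential with leading term $\frac{z^2}{4\xi}d\xi^2$, i.e.\ the Bessel-type ramified pole. The Kummer curve thus degenerates to the Bessel curve exactly as the confluence arrow prescribes, while the monodromy parameter transforms as $\alpha\mapsto\alpha-\tfrac12$ (equivalently $\pm\tfrac12$ shifts), as noted in the remarks after \autoref{TheoFolding}.

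\textbf{Main obstacle.} The hardest point is making the identification of parameters precise, since the statement is only "up to identification of monodromies and parameters." The half-integer shifts of $\alpha$ and the subleading $\hbar^2$ terms such as the $-\tfrac14/\lambda^2$ must be matched to Iwaki's normalization; the $\hbar$-dependent pieces in his quantum curves absorb exactly these terms. Checking that these match consistently on both sides is where I expect the real care to be needed. To do this, I would first compute the classical curves with $\hbar$ reinstated and then match leading coefficients and residues.
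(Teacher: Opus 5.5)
Your Steps 1 and 2 are the paper's proof. Both systems are brought to the Schr\"odinger gauge with $Q=L_{2,1}+\frac14L_{2,2}^2-\frac12\partial L_{2,2}$, and the resulting potentials are matched with Iwaki's Kummer and Bessel potentials. The paper does the Kummer side for the general $0$-member, keeping $H_{0,1}=\lambda_\infty$ free. It then absorbs the constant terms into the monodromy: $4t_{0,0}^2-1=4\lambda_0^2$ and $(\alpha+\frac12)^2-1=4\lambda_0^2$. So the $\hbar$-bookkeeping you flag as the main obstacle is not needed for an identification ``up to parameters''. Your $(\alpha-\frac12)$ versus the paper's $(\alpha+\frac12)$ is only the sign $\alpha\to-\alpha$ in the $(1,1)$ entry of $S\widetilde{L}S^{-1}$, which is immaterial here.

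\textbf{First correction (sign in Step 1).} You have $-\widetilde{L}_{1,1}\,\partial_\lambda\widetilde{L}_{1,2}/\widetilde{L}_{1,2}=-(-z)(-1/\lambda)=-z/\lambda$, so $L_{2,1}=z^2-z/\lambda+\alpha^2/\lambda^2$. This is harmless: it still corresponds to $\kappa=\pm\frac12$.

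\textbf{Second correction (Step 3).} Step 3 does not hold as written. With the odd term present, $Q(\lambda)\,d\lambda^2=\frac{Q(\lambda)}{4\xi}\,d\xi^2$ contains a $\xi^{-3/2}$ term, so it is not a quadratic differential on the $\xi$-sphere. Its $\xi^{-2}$ part, $(\alpha^2-\frac14)/(4\xi^2)$, also does not reproduce Step 2's $((\alpha-\frac12)^2-1)/(4\xi^2)$. The folded potential comes from the folded connection, not from pushing $Q$ forward.

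Your classical limit is also inconsistent. Once $\hbar$ is reinstated ($\hbar\partial_\lambda\widetilde{\Psi}=\widetilde{L}\widetilde{\Psi}$), the $z/\lambda$ term coming from $-\widetilde{L}_{1,1}\partial_\lambda\widetilde{L}_{1,2}/\widetilde{L}_{1,2}$ is $O(\hbar)$. It must therefore be dropped together with the $O(\hbar^2)$ constant.

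Done consistently, the classical curve is $y^2=-\det\widetilde{L}^{\mathrm{red}}=z^2+\alpha^2/\lambda^2$, which is even in $\lambda$. Then $y\,d\lambda=y_{\mathrm{new}}\,d\xi$ gives $y_{\mathrm{new}}^2=\frac{z^2\xi+\alpha^2}{4\xi^2}$. After rescaling $\xi$, this is the Bessel curve with $\lambda_0=\alpha/2$, exactly the folding of \autoref{ex:KummerFold}. That is the correct form of your Step 3. Since the paper only matches the two endpoints, Step 3 is not required in any case.
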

\begin{proof}
    By applying the formulas of the Schr\"odinegr gauge directly to the expressions in \eqref{entriesoper}, one obtains the potential 
    \begin{align}
        Q(\lambda)= \frac{ 4t_{\infty,1}^2 \lambda^2 + 4 H_{0,1} \lambda + 4 t_{0,0}^2 -1  }{4 \lambda^2} 
    \end{align}
    which identifies with the Kummer potential given in \cite{Iwaki2019} up to $\lambda=x$, and the monodromy parameters $ H_{0,1} = \lambda_\infty$ and $4t_{0,0}^2 - 1=4\lambda_0^2$, note that we have fixed the term $t_{\infty,1}^2$ to a constant. Applying the same procedure to go to the oper gauge and then to the Schr\"odinger gauge provides in the $\xi$ variable the potential 
    \begin{align}
        Q(\xi) = \frac{z^2 \xi + \left( \alpha + \frac{1}{2} \right)^2 -1}{4 \xi^2}
    \end{align}
    Identifying with the potential given in \cite{Iwaki2019} for the Bessel case gives $\xi=x$ and the monodromy $\left( \alpha + \frac{1}{2} \right)^2 -1 = 4\lambda_0^2$. This completes the proof. 
\end{proof}
This completes the analysis of the spectral curves and the different gauges. We will now turn our attention to the TR analysis of the spectral curves of both Lax matrices associated to the Bessel and the Bessel-Clifford systems. We will use, for simplicity, the curves given in \cite{Iwaki2019} since they give a rather simpler version. These curves are the same as the ones considered in the Lax systems due to the above identification.

\section{Equivariant topological recursion} \label{sec3}

\subsection{Curves with a symmetry and equivariant topological recursion} \label{sec3.1}
We first start by recalling the main definition of TR as introduced by Eynard-Orantin \cite{Eynard:2007kz}. TR is a universal procedure that associates to a spectral curve a collection of symmetric multidifferentials, a curve with some extra data. It has a wide variety of applications ranging from enumerative geometry, matrix models and Hurwitz theory to integrable systems and Lax representations of integrable hierarchies. First of all, as any recursion, TR needs a set of initial data 
\begin{definition}\label{DefTR}
    The initial data of the TR is a spectral curve $\mathcal{S} = \left( \Sigma,x,y,B \right)$ consisting of
    \begin{itemize}
        \item A Riemann surface $\Sigma$, not necessarily connected or compact;
        \item  a map $x: \Sigma \to \mathbb{C}$ such that its differential $dx$ is meromorphic and has finitely many simple zeroes denoted $a_1, \dots a_N$;
        \item a meromorphic form $dy $ on $\Sigma$  that is holomorphic and non-vanishing  at the zeroes of $dx$ or has at most a double pole;
        \item a symmetric bi-differential $B^{TR}$ on $\Sigma \times \Sigma$ with a polar structure including only a double pole on the diagonal with bi-residue $1$. 
    \end{itemize}
    Then TR produces symmetric multidifferentials on $\Sigma^n$ for $2g-2+n>0$ via the recursion 
    \begin{align}
        \omega_{g,n}(z_1, \dots, z_n) =  & \sum_{i=1}^N  \underset{ z = a_i }{\Res} \,  K_i(z_I,z)     \bigg(  \omega_{g-1,n+1} (z, \sigma_i(z) , z_2, \dots,z_n ) \nonumber  \\\label{EOTRtdom}
     &  + \sum_{\substack{g_1 + g_2 = g, \,\,\,  I_1 \bigsqcup I_2 = \{2,\dots, n \}  \\ (g_i, \vert I_i \vert  ) \neq (0,0) }}  \omega_{g_1, \vert I_1 \vert +1  } (z, z_{I_1}) \omega_{g_2, \vert I_2 \vert +1 } (\sigma_i(z), z_{I_2})      \bigg)
    \end{align}
    where $K_i$ is the recursion kernel defined in the neighbourhood of the point $a_i$ via
    \begin{align}
        K_i(z_1,z) := \frac{\frac{1}{2} \int_{w=\sigma_i(z)}^z   B^{TR}(z_1,w)    }{\left( y(z ) - y(\sigma_i(z) ) dx(z) \right)}
    \end{align}
    In the definition $\sigma_i: U_i \to U_i$ is the Galois involution near the ramification point $a_i \in U_i$. The first two differentials are given by $\omega_{0,1} = ydx$ and $\omega_{0,2} = B^{TR}$.
\end{definition}
It is well known that the differentials $\omega_{g,n}$ produced by the recursion are symmetric meromorphic multidifferential on $\Sigma^n$ with poles only on the ramification points. The bidifferential $B^{TR}$ can actually have an arbitrary non-vanishing bi-residue at the double pole on the diagonal (not necessarily 1). A rescaling of the form $B^{TR} \to c\, B^{TR}$ corresponds to a rescaling of the correlators $\omega_{g,n} \to c^{3g-3+2n} \omega_{g,n}$. 

eq-TR is a version of TR adapted to spectral curves equipped with the action of a finite group \cite{Giacchetto:2021atk}. The group
action induces additional symmetries of the spectral data, which are incorporated into the recursive construction. In the present setting,
the relevant group is $G=\mathbb{Z}_2$.

\begin{definition}[Equivariant topological recursion]\label{def:equTR} Let $G$ be a finite group, the initial data for eq-TR consists of an equivariant spectral curve which is a tuple $\mathcal{S}^{eq} = \{ \Sigma, \varphi, x, y, B, \chi, \upsilon, \beta  \}$ consisting of 
\begin{itemize}
    \item a Riemann surface $\Sigma$ with a group action $\varphi : G \times\Sigma \to \Sigma$ which is denoted by $\varphi_\gamma x = \gamma x$ for an element $\gamma 
    \in G$.
    \item a map $x: \Sigma \to \mathbb{C}$ such that its differential is meromorphic and has finitely many simple zeroes denoted $a_1, \dots a_N$;
    \item a meromorphic form $dy: \Sigma \to \mathbb{C}$ such that it does not vanish or has at most a double pole at the zeroes of $dx$;
    \item a symmetric bidifferential $B$ on $\Sigma \times \Sigma$;
    \item the group action $\varphi$ is free in the neighbourhood of the points $a_1, \dots a_N$; 
    \item three one-dimensional representations $\chi \,, \, \upsilon\,, \beta : G \to \mathbb{C}^\times $ which we refer to as characters and are determined, for any element $\gamma \in G$, by
    \begin{align}
        d x (\gamma z) = \chi_\gamma dx(z), \qquad y(\gamma z) = \upsilon_\gamma y(z), \qquad B(\gamma z_1,z_2) = \beta_\gamma B(z_1,z_2)
    \end{align}
    where $B(z_1,z_2) - B_G^\beta (z_1,z_2)$ is regular as $z_1 \to \gamma z_2$ with
    \begin{align}\label{bextendedregularity}
        B_G^\beta (z_1,z_2) :=   \sum_{\eta \in G} \beta_\eta^{-1} B^{TR}(\eta z_1,z_2)
    \end{align}
    \end{itemize}
    The recursion determining the higher multidifferentials is defined as for the ordinary TR of \autoref{DefTR}, but replacing $\omega_{0,1}=ydx$ by $\omega_{0,1}=\frac{1}{|G|}ydx$, and $B^{TR}$ by $B$. In particular, this affects $\omega_{0,2}$ and, thus, the recursion kernel $K_i$.

\end{definition}
\begin{remark}\label{rmk:eqTRscaling}
Our definition of $B$ in eq-TR differs from the one in \cite{Giacchetto:2021atk} by a factor of $|G|$, that is, $B\to |G|B$, see \eqref{bextendedregularity}. This rescales the generated $\omega_{g,n}$ by a factor of $|G|^{3g+2n-3}$ (not including $\omega_{0,1}$), which, however, after a further rescaling of $\omega_{0,1}=ydx\to |G|ydx$, provides an overall rescaling $\omega_{g,n}\to |G|^{g+n-1}\omega_{g,n}$; see also \cite[Lemma 9.1]{Giacchetto:2021atk} for details. For enumerative interpretations, this rescaling only gives an overall factor for fixed $g,n$. However, we will be interested in deriving a quantum curve, where, in the constructed wave function, multidifferentials $\omega_{g,n}$ with fixed $k=2g+n-2$ are collected together. Here, a scaling by $|G|^{g+n-1}$ is very unnatural and does not give a global transformation of either the wave function or the quantum curve. As we will see later, our \autoref{def:equTR} (which was already used in \cite{Alexandrov:2025kzn}) is the most natural one for deriving the quantum curves considered here. None of the properties proved in \cite{Giacchetto:2021atk} are affected by this rescaling.
\end{remark}

From the above definition, there are some consequences that are forced on the characters (see \cite{Giacchetto:2021atk} for details), for instance, the character $\beta$ must satisfy $\beta^2=1$, equivalently $\beta:G \to \{ \pm 1\}$. Then, the multidifferentials generated by eq-TR can just be non-vanishing only if the characters satisfy $\chi . \upsilon = \beta$. Thus, we will restrict ourselves to this case. Assuming this, the multidifferentials satisfy the group action
\begin{align*}
    \omega_{g,n}(z_1,..,\gamma z_i,...,z_n)=\beta_{\gamma}\omega_{g,n}(z_1,...,z_i,...,z_n)
\end{align*}
for all $i\in \{1,...,n\}$. In other words, the group action on $B$ is inherited by all $\omega_{g,n}$, note that this is also true for $\omega_{0,1}$.

Let us specify the situation to our case of interest $G=\mathbb{Z}_2$, the three characters are restricted to the one dimensional representations with $\rho(\iota)^2=1$ since the group is given by $G=\{1,\iota\}$ and $\iota^2=\mathrm{Id}$, thus, the only representations are simply the \emph{trivial representation} and the \emph{sign representations}. The total number of choices that one has for a general equivariant spectral curve is $4$, since $\beta$ is uniquely determined by $\chi,\upsilon$. The symmetric bidifferential in this case has to be chosen (up to holomorphic ambiguities)
\begin{align}\label{extendedBergman}
  B(z_1,z_2)=B^{TR}(z_1,z_2)  +  \beta_\iota B^{TR}(\iota z_1,z_2)
\end{align}
to satisfy the regularity condition \eqref{bextendedregularity}. Finally, one also realizes that the $G-$orbit of a ramification point $a$ consists of ramification points again since $dx(\gamma a) = \chi_\gamma dx(a) = 0$. Thus, ramification points cannot be fixed points of the group action due to the free action around $a$. 

Before continuing, we want to remark that for both, TR and eq-TR, the correlators satisfy the linear and quadratic loop equations, this was not mentioned explicitly in \cite{Giacchetto:2021atk}, but can easily be verified. The linear, quadratic loop equation and projection property read
    \begin{itemize}
        \item \textit{Linear loop equation}: for all $g,n\geq 0$
        \begin{align}\label{LLE}
        \omega_{g,n+1}(z,z_{\llbracket n \rrbracket})+\omega_{g,n+1}(\sigma_i(z),z_{\llbracket n \rrbracket})
        \end{align}
        is holomorphic as $z\to a_i$ and has at least a simple zero in $z$ at $a_i$.

        \item \textit{Quadratic loop equation}: for all $g,n\geq 0$
        \begin{align}\label{QLE}
        \omega_{g-1,n+2}(z,\sigma_i(z),z_{\llbracket n \rrbracket})+\sum_{\substack{g_1+g_2=g\\ J_1\sqcup J_2=\llbracket n \rrbracket}}\omega_{g_1,|J_1|+1}(\sigma_i(z),z_{J_1})\omega_{g_2,|J_2|+1}(z,z_{J_2})
        \end{align}
        is holomorphic as $z\to a_i$ and has at least a double zero in $z$ at $a_i$.

        \item \textit{Projection property}: for $2g-2+n>0$, $\omega_{g,n}$ can be reconstructed from their principal parts
        \begin{align}
			\omega_{g,n}(z_1,\dots,z_n) =  \Big(\frac{1}{|G|^n} \prod_{i=1}^n \sum_{a_j}\Res_{z'_i\to a_j}	\int_{a_j}^{z'_i} B(\bullet,z_i) \Big) \omega_{g,n}(z'_1,\dots,z'_n).	
		\end{align}
    \end{itemize}
Let us also mention a slight variation for the quadratic loop equation if we have an irregular curve in the sense of \cite{DoNorbury18,DoNorbury2018,Chekhov:2017aot} which means that $y$ can have a simple pole at the ramification points of $x$. In that situation, the quadratic loop equation states that the expression \eqref{QLE} is (just) holomorphic at the ramification points (it does not necessarily have a double zero at the ramification points) \cite{Bouchard:2024fih}. On a compact curve, the projection property uniquely determines the differentials, for instance, by inheriting a normalization from $B$.

\begin{remark}
For eq-TR, the linear and quadratic loop equations are satisfied both in the setting of our \autoref{def:equTR} and in the rescaled version of \cite{Giacchetto:2021atk}, see \autoref{rmk:eqTRscaling}. This is not completely obvious at first sight, but follows from a simple degree-counting argument. Note that this is true in both cases $\omega_{0,1}=\frac{1}{|G|}ydx$ in eq-TR and $\omega_{0,1}=ydx$ in ordinary TR.
\end{remark}

\subsection{Equivariant Topological Recursion on hyperelliptic curves with
$\mathbb{Z}_2\times\mathbb{Z}_2$ symmetry}

In the different examples where eq-TR has already appeared
\cite{Giacchetto:2021atk,Alexandrov:2025kzn,fang2026involutionequivarianttopologicalrecursionmirror},
the characters were prescribed by the spectral curve. However, what happens if a
spectral curve admits more than one symmetry transformation, so that there are
several possible choices of group actions and characters? Let us consider a
spectral curve equipped with a $\mathbb{Z}_2\times\mathbb{Z}_2$ group action
(Klein four-group), arising from a global hyperelliptic curve of the form
\begin{align}
    y^2=Q(x^2).
\end{align}
The three non-trivial elements of the group act globally by
\begin{align}
    \sigma:&\, (x,y)\to (x,-y), \label{involutionss}\\
    \iota_1:&\, (x,y)\to (-x,y), \label{involutionsi1}\\
    \iota_2:&\, (x,y)\to (-x,-y), \label{involutionsi2}
\end{align}
with $\iota_2=\iota_1\sigma=\sigma\iota_1$, and $\gamma^2=1$ for all
$\gamma\in\mathbb{Z}_2\times\mathbb{Z}_2$. Each of these symmetry actions
uniquely determines the characters corresponding to eq-TR for a single
$\mathbb{Z}_2$ group. Note, however, that
$\sigma:(x,y)\to(x,-y)$ is the Galois involution of the hyperelliptic curve
and cannot be taken as the group action, since it does not act freely on the
ramification points of $x$. Thus, we are left with two different choices for
running eq-TR on this curve:
\begin{center}
\begin{tabular}{||c c c c c ||}
 \hline
 $G$ & $\phi$ & $\chi$ & $\upsilon$ & $\beta$\\ [0.5ex]
 \hline\hline
 $\mathbb{Z}_2$ & $\iota_1$ & $-1$ & $1$ & $-1$\\
 \hline
 $\mathbb{Z}_2$ & $\iota_2$ & $-1$ & $-1$ & $1$\\
 \hline
\end{tabular}
\end{center}
where the representation $1$ corresponds to the trivial representation and
$-1$ to the sign representation.

We now prove that the $\omega_{g,n}$ generated by eq-TR for these two choices
are actually the \emph{same}.

\begin{theorem}\label{thm:hyperellepticequTR}
    Consider a hyperelliptic curve of the form $y^2=Q(x^2)$. Let
    \begin{itemize}
        \item $\omega^1_{g,n}$ be the multidifferentials associated with the
        spectral curve
        $\mathcal{S}^{eq}_1=
        \{\Sigma,\iota_1,x,y,B_1,\chi_1=-1,\upsilon_1=1,\beta_1=-1\}$,
        \item $\omega^2_{g,n}$ be the multidifferentials associated with the
        spectral curve
        $\mathcal{S}^{eq}_2=
        \{\Sigma,\iota_2,x,y,B_2,\chi_2=-1,\upsilon_2=-1,\beta_2=1\}$.
    \end{itemize}
    Here, $\iota_i$ are given by
    \eqref{involutionsi1}, \eqref{involutionsi2}, and $B_i$ by
    \eqref{extendedBergman} with the corresponding $\iota_i,\beta_i$. Then,
    \begin{align*}
        \omega^1_{g,n}=\omega^2_{g,n},
        \qquad \forall\, 2g-2+n>0
    \end{align*}
    locally around any ramification point. It also holds globally on a compact curve if $B_1$ and $B_2$ are normalized with respect to the same homology basis.
    \begin{proof}
        The recursion of eq-TR is equivalent to the statement that the linear
        loop equation \eqref{LLE}, the quadratic loop equation \eqref{QLE},
        and the projection property are satisfied.

        By construction, the linear loop equation \eqref{LLE} and the projection
        property are satisfied. Thus, we only have to check that the quadratic
        loop equation is the same for both recursions. Importantly, the Bergman
        kernels $\omega_{0,2}^i=B_i$ entering the recursion and the quadratic
        loop equation are different.

       We now prove by induction on $2g-2+n$ that $\omega_{g,n}^i$ satisfy
the same quadratic loop equations. For $\omega_{1,1}^i$ there is one special term in the quadratic loop equation. Thus, for the initial condition of the induction, we first check
\begin{align*}
    \omega^1_{0,2}(\sigma(z),z)
    =&B^{TR}(\sigma(z),z)
    +\beta_1 B^{TR}(\iota_1(\sigma(z)),z)\\
    =&B^{TR}(\sigma(z),z)
    -B^{TR}(\iota_2(z),z)\\
    =&B^{TR}(\sigma(z),z)
    +B^{TR}(\iota_2(z),\sigma(z))
    +\mathcal{O}((z-a)^2)\\
    =&B^{TR}(\sigma(z),z)
    +B^{TR}(\iota_2(\sigma(z)),z)
    +\mathcal{O}((z-a)^2)\\
    =&\omega^2_{0,2}(\sigma(z),z)
    +\mathcal{O}((z-a)^2),
\end{align*}
where we have used
$B^{TR}(z_1,z)+B^{TR}(z_1,\sigma(z))
=\mathcal{O}(dx(z_1)^1dx(z)^1)$ which vanishes quadratically at the ramification points if $z_1=\iota(z)$ or $\iota_2(z)$ and $z\to a$ permuting the ramification points. This shows that the quadratic loop equations for $\omega_{1,1}^1$ and
$\omega_{1,1}^2$ coincide, and hence that
$\omega_{1,1}^1=\omega_{1,1}^2$.

          Assume now that this holds for
        $\omega_{g',n'}^i$ with
        $2g'-2+n'\leq 2g-2+n$. Then all
        $\omega_{g',n'}^1=\omega_{g',n'}^2$ already coincide, since the
        linear and quadratic loop equations together with the projection
        property uniquely determine them. In the next recursion step, at
        $2g-2+n+1$, we extract from the quadratic loop equation the terms
        that can produce a difference between the two recursions. These terms
        involve only $\omega_{0,2}^1$ or $\omega_{0,2}^2$, respectively.
        For some $z_j\in I$, one family of such terms is of the form
        \begin{align*}
            &\omega^1_{0,2}(z,z_j)
            \omega^1_{g,n}(\sigma(z),z_{I\setminus \{z_j\}})
            +\omega^1_{0,2}(\sigma(z),z_j)
            \omega^1_{g,n}(z,z_{I\setminus \{z_j\}})
            \\
            =&\bigg(
            \omega^1_{0,2}(\sigma(z),z_j)
            -\omega^1_{0,2}(z,z_j)
            \bigg)
            \omega^1_{g,n}(z,z_{I\setminus \{z_j\}})
            +\mathcal{O}((z-a)^2)
            \\
            =&\bigg(
            B^{TR}(\sigma(z),z_j)
            +\beta_1 B^{TR}(\sigma\iota_1 z,z_j)
            -B^{TR}(z,z_j)
            -\beta_1 B^{TR}(\iota_1 z,z_j)
            \bigg)
            \omega^1_{g,n}(z,z_{I\setminus \{z_j\}})
            +\mathcal{O}((z-a)^2)
            \\
            =&\bigg(
            B^{TR}(\sigma(z),z_j)
            -B^{TR}(\iota_2 z,z_j)
            -B^{TR}(z,z_j)
            +B^{TR}(\iota_2\sigma z,z_j)
            \bigg)
            \omega^1_{g,n}(z,z_{I\setminus \{z_j\}})
            +\mathcal{O}((z-a)^2)
            \\
            =&\bigg(
            \omega^2_{0,2}(\sigma(z),z_j)
            -\omega^2_{0,2}(z,z_j)
            \bigg)
            \omega^1_{g,n}(z,z_{I\setminus \{z_j\}})
            +\mathcal{O}((z-a)^2)
            \\
            =&\bigg(
            \omega^2_{0,2}(\sigma(z),z_j)
            -\omega^2_{0,2}(z,z_j)
            \bigg)
            \omega^2_{g,n}(z,z_{I\setminus \{z_j\}})
            +\mathcal{O}((z-a)^2)\\
            =&
            \omega^2_{0,2}(\sigma(z),z_j)\omega^2_{g,n}(z,z_{I\setminus \{z_j\}})
            +\omega^2_{0,2}(z,z_j)
            \omega^2_{g,n}(\sigma(z),z_{I\setminus \{z_j\}})
            +\mathcal{O}((z-a)^2).
        \end{align*}
        We have used, in the first equality, the linear loop equation for
        $\omega_{g,n}^1$; in the second, the definition of $B_1$; in the
        third, the group relations $\sigma\iota_1=\iota_2$ and
        $\iota_1=\iota_2\sigma$; in the fourth, the definition of $B_2$;
         in the fifth, the induction hypothesis; and in the last, the linear
        loop equation for $\omega_{g,n}^2$.

        This computation applies to every $z_j\in I$, and hence the quadratic
        loop equations for $\omega_{g,n+1}^1$ and $\omega_{g,n+1}^2$ coincide.
        The claim then follows from the equivalence between linear, quadratic loop equations plus projection property and the definition of eq-TR.
    \end{proof}
\end{theorem}

The theorem has several consequences. Let $\omega_{g,n}=\omega^1_{g,n}=\omega^2_{g,n}$ for $2g-2+n\neq 0$, note that $\omega_{0,2}^1\neq \omega_{0,2}^2$. Then, we have the full $\mathbb{Z}_2\times \mathbb{Z}_2$ acting on the differentials by 
\begin{align}\label{eqtromegasymmetrys}
    \omega_{g,n+1}(\sigma(z),z_I)&=-\omega_{g,n+1}(z,z_I)\\\label{eqtromegasymmetryi1}
    \omega_{g,n+1}(\iota_1(z),z_I)&=-\omega_{g,n+1}(z,z_I)\\\label{eqtromegasymmetryi2}
    \omega_{g,n+1}(\iota_2(z),z_I)&=\omega_{g,n+1}(z,z_I).
\end{align}

Let us give some examples for illustration.

\begin{example}[Rescaled and degenerate Painlev\'e II]\label{ex:RescaledPainleve2}
    The following spectral curve can be obtained, after rescaling, as a degenerate limit of the Painlev\'e II spectral curve \cite{borot2010asymptoticexpansiontracywidomgue,Iwaki:2014hmk}. It has genus~0, whereas the generic Painlev\'e II spectral curve has genus~1 \cite{Marchal:2019bia}:
    \begin{align*}
        x(z)=z+\frac{1}{z},\qquad
        y(z)=z^2-\frac{1}{z^2},\qquad
        y^2=x^2(x^2-4).
    \end{align*}
    Applications of TR to this spectral curve can be found in \cite{Iwaki:2014hmk}. Since $x$ and $y$ satisfy all the necessary symmetries, applying eq-TR gives  ($\omega_{0,1}=\frac{1}{2}ydx$ and $\omega_{0,2}=B$)
    \begin{align*}
        \omega_{0,3}(z_1,z_2,z_3)
        &=-\frac{2(z_1^2+1)(z_2^2+1)(z_3^2+1)}
        {(z_1^2-1)^2(z_2^2-1)^2(z_3^2-1)^2}
        \,dz_1\,dz_2\,dz_3,\\
        \omega_{1,1}(z)
        &=-\frac{z^2(1+z^2)}{4(z^2-1)^4}\,dz,\\
        \omega_{2,1}(z)
        &=-\frac{105z^8(z^2+1)}{16(z^2-1)^{10}}\,dz.
    \end{align*}
    We keep the list of correlators as short as possible, since the expressions become increasingly lengthy as $n$ grows. However, the examples above suffice for illustration. It is easy to check that these multidifferentials satisfy the three symmetry transformations \eqref{eqtromegasymmetrys}, \eqref{eqtromegasymmetryi1}, and \eqref{eqtromegasymmetryi2}. Thus, it does not matter which equivariant spectral curve $\mathcal{S}^1$ or $\mathcal{S}^2$ we are applying. The same multidifferentials are generated.
\end{example}

\begin{example}[Bessel system/Kummer curve]\label{ex:Kummer}
    The following spectral curve corresponds to the Bessel system of \autoref{Section2}, which was studied from the perspective of TR in \cite{Iwaki2019}:
    \begin{align*}
        x(z)=\frac{i\alpha}{2}\bigg(z+\frac{1}{z}\bigg),\qquad
        y(z)=\frac{z^2-1}{z^2+1},\qquad
        y^2=1+\frac{\alpha^2}{x^2}.
    \end{align*}
    The eq-TR multidifferentials associated with this curve are
    \begin{align*}
        \omega_{0,3}(z_1,z_2,z_3)
        &=\frac{16 i (z_1^2+1)(z_2^2+1)(z_3^2+1)}
        {\alpha (z_1^2-1)^2 (z_2^2-1)^2 (z_3^2-1)^2}
        \,dz_1\,dz_2\,dz_3,\\
        \omega_{1,1}(z)
        &=\frac{i (z^2+1)^3}
        {2\alpha (z^2-1)^4}\,dz,\\
        \omega_{2,1}(z)
        &=-\frac{i z^2 (z^2+1)^3
        \left(9 z^8+356 z^6+950 z^4+356 z^2+9\right)}
        {2\alpha^3 (z^2-1)^{10}}\,dz.
    \end{align*}
    Again, all three symmetry transformations \eqref{eqtromegasymmetrys}, \eqref{eqtromegasymmetryi1}, and \eqref{eqtromegasymmetryi2} are satisfied.
\end{example}

\subsection{Equivariant Topological Recursion on non-algebraic curves with
$\mathbb{Z}_2\times\mathbb{Z}_2$ symmetry}
The discussion of the previous subsection on hyperelliptic curves can also be applied beyond algebraic curves. Another important class of spectral curves have the form
\begin{align}\label{SWcurve}
    e^y+e^{-y}=Q(x).
\end{align}
These curves provide particular examples of Seiberg--Witten-like spectral curves.
Seiberg--Witten curves, originally introduced to describe the low-energy
effective theory of $\mathcal N=2$ supersymmetric gauge theories
\cite{Seiberg:1994rs,Seiberg:1994aj}, have subsequently played an
important role in integrable systems \cite{Gorsky:1995zq}, matrix models,
and topological string theory \cite{Klemm:2009}. Closely related spectral
curves also arise naturally in the context of TR
\cite{Norbury:2014,Dunin-Barkowski:2012kbi,FangLiuZong:2014,Borot:2019,
Borot:2025whittaker,Kazarian:2021zqh,Alexandrov:2025kzn}.

If we assume $Q(x)$ to be even $Q(x)=Q(-x)$, then the spectral curve \eqref{SWcurve} admits $\mathbb{Z}_2\times\mathbb{Z}_2$ symmetry by
\begin{align}
    \sigma:&\, (x,Y)\to (x,Y^{-1}), \label{involutionssSW1}\\
    \iota_1:&\, (x,Y)\to (-x,Y), \label{involutionsi1SW1}\\
    \iota_2:&\, (x,Y)\to (-x,Y^{-1}). \label{involutionsi2SW2}
\end{align}
If we assume $Q(x)$ to be odd $Q(x)=-Q(-x)$, then the spectral curve \eqref{SWcurve} admits $\mathbb{Z}_2\times\mathbb{Z}_2$ symmetry by
\begin{align}
    \sigma:&\, (x,Y)\to (x,Y^{-1}), \label{involutionssSW}\\
    \iota_1:&\, (x,Y)\to (-x,-Y), \label{involutionsi1SW}\\
    \iota_2:&\, (x,Y)\to (-x,-Y^{-1}), \label{involutionsi2SW}
\end{align}
with $Y=e^y$, thus $\sigma$ is identified (in both cases) globally with the Galois involution sending $y\to -y$. The factor of $-1$ in the second argument (in front of $Y$ or $Y^{-1}$) in the second and third involution, does not imply anything to $\beta$ since $y=\log Y$ and this trivial sign shifts $y$ which does not change the generated $\omega_{g,n}$ by either TR or eq-TR. This directly implies that the statement of \autoref{thm:hyperellepticequTR} also applies to the spectral curves of \eqref{SWcurve} in the special cases of even (or odd $Q$) with involutions \eqref{involutionsi1SW1} and \eqref{involutionsi2SW2} (or with involutions \eqref{involutionsi1SW} and \eqref{involutionsi2SW}, respectively).
\begin{corollary}
    Consider a curve of the form $e^y+e^{-y}=Q(x)$ with $Q$ even (or odd, respectively). Let
    \begin{itemize}
        \item $\omega^1_{g,n}$ be the multidifferentials associated with the
        spectral curve
        $\mathcal{S}^{eq}_1=
        \{\Sigma,\iota_1,x,y,B_1,\chi_1=-1,\upsilon_1=1,\beta_1=-1\}$,
        \item $\omega^2_{g,n}$ be the multidifferentials associated with the
        spectral curve
        $\mathcal{S}^{eq}_2=
        \{\Sigma,\iota_2,x,y,B_2,\chi_2=-1,\upsilon_2=-1,\beta_2=1\}$.
    \end{itemize}
    Here, $\iota_i$ are given by \eqref{involutionsi1SW1}, \eqref{involutionsi2SW2} for $Q$ even, and $B_i$ by
    \eqref{extendedBergman} with the corresponding $\iota_i,\beta_i$
    (or \eqref{involutionsi1SW}, \eqref{involutionsi2SW} for $Q$ odd, and $B_i$ by
    \eqref{extendedBergman} with the corresponding $\iota_i,\beta_i$, respectively). Then,
    \begin{align*}
        \omega^1_{g,n}=\omega^2_{g,n},
        \qquad \forall\, 2g-2+n>0.
    \end{align*}
\end{corollary}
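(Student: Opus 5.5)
The plan is to reduce the corollary to the proof of \autoref{thm:hyperellepticequTR}. That proof used only three ingredients: (i) the Klein group relations $\sigma\iota_1=\iota_2$ and $\iota_1=\iota_2\sigma$, with $\sigma$ acting locally near every ramification point as the Galois involution of $x$; (ii) the characters $\chi,\upsilon,\beta$ attached to $\iota_1,\iota_2$ as in the table; (iii) the characterisation of eq-TR by the linear and quadratic loop equations together with the projection property. The hyperelliptic form $y^2=Q(x^2)$ itself was never used beyond these. So I will verify (i) and (ii) for the curve $e^y+e^{-y}=Q(x)$ and then repeat the induction verbatim.

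\textbf{Step 1: the involutions.} For $Q$ even, the maps \eqref{involutionssSW1}, \eqref{involutionsi1SW1} and \eqref{involutionsi2SW2} clearly preserve $Y+Y^{-1}=Q(x)$. For $Q$ odd, $(x,Y)\mapsto(-x,-Y)$ sends $Y+Y^{-1}$ to $-(Y+Y^{-1})=-Q(x)=Q(-x)$, so \eqref{involutionssSW}, \eqref{involutionsi1SW} and \eqref{involutionsi2SW} are also automorphisms. In both cases they commute and satisfy $\iota_2=\sigma\iota_1$. The ramification points of $x$ are the points with $dx=0$, i.e.\ $Y=\pm1$ with $Q'(x)\neq 0$ generically. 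At these points $\sigma$ is the local Galois involution, and $\iota_{1,2}$ act freely near them since they change the sign of $x$. (Ramification points at $x=0$ are excluded, as in the hyperelliptic case.)

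\textbf{Step 2: the characters.} We have $dx(\iota_i z)=-dx(z)$, so $\chi=-1$ for both. For $y$, either $y\circ\iota_1=y$ (even case), or $y\circ\iota_1=y+i\pi$ (odd case, with $y=\log Y$). A constant shift of $y$ does not change $\omega_{0,1}$ up to an exact form. More to the point, it changes neither the difference $y(z)-y(\sigma z)$ nor $dy$, and only these enter the recursion kernel and the loop equations. So effectively $\upsilon_1=1$ and $\upsilon_2=-1$, giving $\beta_1=-1$ and $\beta_2=1$. I expect this to be the only real obstacle. One has to argue carefully that eq-TR only sees $dy$, or equivalently $\omega_{0,1}$ modulo terms invariant under $\sigma$, which are killed by the kernel's antisymmetric denominator. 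Then the equivariance requirement on $y$ holds in the differential sense. I would phrase it as: the quadratic loop equation for $\omega_{0,1}$ involves only $\omega_{0,1}(z)-\omega_{0,1}(\sigma z)$, which is unaffected by constant shifts.

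\textbf{Step 3: the induction.} With $B_i$ defined by \eqref{extendedBergman}, I will run the same induction on $2g-2+n$. In the base case, compare $\omega^1_{0,2}(\sigma z,z)$ with $\omega^2_{0,2}(\sigma z,z)$ modulo $\mathcal{O}((z-a)^2)$, using $B^{TR}(z_1,z)+B^{TR}(z_1,\sigma z)=\mathcal{O}(dx(z_1)dx(z))$. In the inductive step, rewrite the terms of the quadratic loop equation that contain $\omega^{i}_{0,2}$ using the linear loop equation, the group relations and the definitions of $B_i$, exactly as in the displayed chain of equalities. Uniqueness from the loop equations and the projection property then gives $\omega^1_{g,n}=\omega^2_{g,n}$. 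This holds locally at the ramification points and globally with the same normalization of $B^{TR}$.
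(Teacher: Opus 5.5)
Your proposal is correct and follows the paper's own route. The paper simply observes that $\sigma$ is globally the Galois involution $y\mapsto -y$, and that the extra sign in front of $Y$ (or $Y^{-1}$) in $\iota_1,\iota_2$ only shifts $y=\log Y$ by a constant, which is invisible to (eq-)TR, so the proof of \autoref{thm:hyperellepticequTR} carries over verbatim. One ingredient used in Step~3 is not among your listed (i)--(iii), namely $B^{TR}(z_1,z)+B^{TR}(z_1,\sigma z)=\mathcal{O}(dx(z_1)dx(z))$; it also holds here because $\sigma$ is the global deck involution of the degree-two map $x$ on $Y+Y^{-1}=Q(x)$, exactly as in the hyperelliptic case.
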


Let us also illustrate a class of examples involving non-algebraic curves.

\begin{example}[Chebyshev curves]\label{ex:Chebyshev}
    Let $T_m(x)$ be the Chebyshev polynomial satisfying, for $x=z+\frac{1}{z}$, the relation
        $z^m+z^{-m}=2T_m\left(\frac{x(z)}{2}\right)$.
    We then consider the spectral curve
    \begin{align*}
        x(z)=z+\frac{1}{z},\qquad 
        y(z)=m\log z,\qquad 
        e^y+e^{-y}=2T_m\left(\frac{x}{2}\right).
    \end{align*}
    For $m=1$, this curve was studied from the TR perspective in \cite{Norbury:2014} in connection with Gromov--Witten invariants of $\mathbb{P}^1$, and the conjectural relation was proved in \cite{Dunin-Barkowski:2012kbi}. This spectral curve was also studied from the eq-TR perspective in \cite{Alexandrov:2025kzn}, providing a new spin on $\kappa$-classes. The first few eq-TR multidifferentials are
    \begin{align*}
        \omega_{0,3}(z_1,z_2,z_3)
        &=-
        \frac{8(z_1^2+1)(z_2^2+1)(z_3^2+1)}
        {m(z_1^2-1)^2(z_2^2-1)^2(z_3^2-1)^2}
        \,dz_1\,dz_2\,dz_3,\\[2mm]
        \omega_{1,1}(z)
        &=-
        \frac{(z^2+1)(z^4+4z^2+1)}
        {6m(z^2-1)^4}
        \,dz,\\[2mm]
        \omega_{2,1}(z)
        &=
        \frac{z^2(z^2+1)(z^4+8z^2+1)(z^8-144z^6-344z^4-144z^2+1)}
        {15m^3(z^2-1)^{10}}
        \,dz.
    \end{align*}
    These clearly satisfy the symmetry transformations \eqref{eqtromegasymmetrys}, \eqref{eqtromegasymmetryi1}, and \eqref{eqtromegasymmetryi2}.
\end{example}
After some examples illustrating the invariance of the differentials on the choice of the symmetry, we are now ready to proceed to the folding of meromorphic differentials. This constitutes the TR analogue of the folding operation on the meromorphic differential system studied in \autoref{Section2}.

\subsection{Folding equivariant topological recursion with $\mathbb{Z}_2\times\mathbb{Z}_2$ to ordinary topological recursion}\label{sec:foldingomegagn}

Let us now restrict ourselves to the setting of a genus 0 spectral curve. The three involutions of the $\mathbb{Z}_2\times\mathbb{Z}_2$-group can be chosen on $\mathbb{P}^1$ as
\begin{align*}
    \sigma:z &\to \frac{1}{z}, \qquad
    \iota_1:z \to -z, \qquad
    \iota_2:z \to -\frac{1}{z}.
\end{align*}
This can be achieved by a Möbius transformation and the fact that the involutions are defined globally. Furthermore, having a function $x$ on $\mathbb{P}^1$ means that it is rational in a global coordinate $z$. If we restrict to a rational function $x$ satisfying the involution identities \eqref{involutionss}, \eqref{involutionsi1}, and \eqref{involutionsi2}, and having only two simple ramification points, it has to have the following form
\begin{align}\label{xrational}
    x(z)=c\bigg(z+\frac{1}{z}\bigg)^{\pm 1}.
\end{align}
The rational function $y$ together with the desired symmetry transformations has, in general, the representation
\begin{align}\label{yrational}
    y(z)=\bigg(z+\frac{1}{z}\bigg)\bigg(z-\frac{1}{z}\bigg)R_y\bigg(z^2+\frac{1}{z^2}\bigg),
\end{align}
where again $R_y$ is some rational function invariant under all involutions, and $\big(z+\frac{1}{z}\big)\big(z-\frac{1}{z}\big)$ transforms as desired.

\begin{remark}\label{rmk:ywithlog}
    Allowing more general $y$, as in the case of a non-algebraic curve, means allowing $dy$ to be meromorphic, which is rational for genus 0 curves. The transformations under the involutions should then hold up to an additive constant. This means, for instance, that $y$ can contain a logarithm $\log z$, which transforms in the correct way. Thus, we also want to consider
    \begin{align}
        y(z)=c_0\log z+y_{\mathrm{rational}}(z),
    \end{align}
    where $y_{\mathrm{rational}}$ is given by \eqref{yrational}. It is easy to check that this family satisfies the desired properties under the involution transformations up to an additive constant. We further assume that $x$ has at least a simple pole at $z=0,\infty$; otherwise, the use of Log-TR \cite{Alexandrov:2023tgl} seems necessary, which we want to avoid. In principle, one can allow more logarithmic terms transforming accordingly under the involutions; however, for the purposes of this article, we restrict ourselves to the form above.
\end{remark}

The extended Bergman kernel $B_2$ defined in \autoref{thm:hyperellepticequTR} (with character $\beta=1$), used for eq-TR, has in this coordinate the form
\begin{align}\label{Bergmanfolding}
    B_2(z_1,z_2)
    =\frac{dz_1\,dz_2}{(z_1-z_2)^2}+\frac{dz_1\,dz_2}{(1+z_1z_2)^2}
    =\frac{dw(z_1)dw(z_2)}{(w(z_1)-w(z_2))^2}
    =B^{TR}(w_1,w_2),
\end{align}
where $w(z)=z-\frac{1}{z}=z+\iota_2(z)$ and $w_i=w(z_i)$. Thus, the extended Bergman kernel of eq-TR with character $\beta=1$ and involution $\iota_2$ descends, via the cover $w(z)=z-\frac{1}{z}=z+\iota_2(z)$, to the ordinary TR Bergman kernel $B^{TR}(w_1,w_2)$. This can be seen as a ``folding'' of $B_2$ on the $z$-space to $B^{TR}$ on the $w$-space in the following sense.

\begin{definition}\label{def:foldingdifferentials}
Let $\omega_n$ be a rational multidifferential on $(\mathbb{P}^1_z)^n$ which is invariant under the pullback of an involution $\iota$, that is, it satisfies
\begin{align*}
  \iota^*_i \omega_{n}(z_1,\ldots,z_n)  := \omega_{n}(z_1,\ldots,\iota z_i,\ldots,z_n)
    =\omega_{n}(z_1,\ldots,z_i,\ldots,z_n).
\end{align*}
For $w:\mathbb{P}^1_z\to \mathbb{P}^1_w$ with $w(z)=z+\iota z$, there exists a rational multidifferential $\overline{\omega_n}(w_1,\ldots,w_n)$ on $(\mathbb{P}^1_w)^n$, with $w_i=w(z_i)$ coinciding with $\omega_{n}$. The unique multidifferential $\overline{\omega_n}$ is called the \emph{folding} of $\omega_n$ under $\iota$.
\end{definition}
This operation is usually referred to as a \emph{descent}, but to keep the terminology consistent with \autoref{Section2}, we will call it a folding. The folding is unique, since it is fully determined by the pole structure since we work on $\mathbb{P}^1$. Note that one discussing a folding of a meromorphic differential from one space to another, the function relating the two spaces (here $w(z)$) is taken to be non-constant, this is evidently the case for our involution $\iota_2$.

We have already seen in \eqref{Bergmanfolding} a folding of $B_2$ under $\iota_2$. The reason is that $\iota_2$ transforms $B_2$ with character $\beta_2=1$. However, $B_1$ of \autoref{thm:hyperellepticequTR} cannot be written as a folding under $\iota_1$ of a Bergman kernel of ordinary TR, since $\beta_1=-1$. Let us give another example of the folding of a (multi-)differential.

\begin{example}
    Let $x,y$ be as in \eqref{xrational} and \eqref{yrational}. The differential $\omega^{eq}_{0,1}=\frac{1}{2}y(z)dx(z)$ has a folding under $\iota_2$, since it transforms accordingly with respect to $\iota_2$. However, $x,y$ have no foldings as rational functions. Nevertheless, the folding $\overline{\omega_{0,1}}$ can be constructed by
    \begin{align*}
        \overline{\omega^{eq}_{0,1}}(w(z))
        =\frac{1}{2}\frac{y(z)}{2x(z)}d\big(x(z)^2\big)
        =\frac{1}{2}y_{new}(w)dx_{new}(w)=\frac{1}{2}\omega^{TR}_{0,1}(z).
    \end{align*}
    Thus, replacing $y\to\frac{y(z)}{2x(z)}=y_{\mathrm{new}}(w)$ and $x\to x(z)^2=x_{\mathrm{new}}(w)$, with $x,y$ given by \eqref{xrational} and \eqref{yrational}, provides global rational functions in $w$, since $z^2+\frac{1}{z^2}=w^2+2$.
\end{example}

\begin{remark}
    Note that, if we include logarithms for $y$ as in \autoref{rmk:ywithlog}, the corresponding $y_{\mathrm{new}}$ is not globally defined, just as $y$ was not. Moreover, the differential $dy_{\mathrm{new}}=d\big(\frac{y(z)}{2x(z)}\big)$ is not meromorphic, neither in $z$ nor in $w$. However, focusing on $z=\pm1$, corresponding to $w=0$, the two local inverse branches satisfy $z_-(w)=-z_+(-w)$. Together with the involution properties of $x$ and $y$, this implies that $y_{\mathrm{new}}$ admits a local odd expansion around $w=0$, i.e. $y_{\mathrm{new}}(-w)=-y_{\mathrm{new}}(w).$ For our later purposes, this local property will be sufficient.
\end{remark}

Since the multidifferentials generated by eq-TR with $x,y$ given by \eqref{xrational} and \eqref{yrational} on the spectral curve $\mathcal{S}^{eq}_2=\{\Sigma,\iota_2,x,y,B_2,\chi_2=-1,\upsilon_2=-1,\beta_2=1\}$ (or, due to \autoref{thm:hyperellepticequTR}, equivalently by $\mathcal{S}^{eq}_1$) are invariant under the involution $\iota_2$, these differentials $\omega_{g,n}$ naturally have a folding as defined in \autoref{def:foldingdifferentials}. We prove that the folding $\overline{\omega_{g,n}}$ is actually given by ordinary TR on the spectral curve $\mathcal{S}^f=(\Sigma^f,x^2,\frac{y}{2x},B^{TR})$ with $w:\Sigma\to\Sigma^f$.

\begin{theorem}\label{thm:foldingomegagn}
    Let $\omega^{eq}_{g,n}$ be generated by eq-TR on the spectral curve $\mathcal{S}^{eq}_2$, and let $\omega^{TR}_{g,n}$ be generated by TR on the spectral curve $\mathcal{S}^f$ with $w:\Sigma\to\Sigma^f$ and $w(z)=z+\iota_2 z$, as above. Then the folding $\overline{\omega^{eq}_{g,n}}$ of $\omega^{eq}_{g,n}$ under $\iota_2$ to $\Sigma^f$ coincides with $\omega^{TR}_{g,n}$ up to a symmetry factor; more precisely,
    \begin{align}
        \overline{\omega^{eq}_{g,n}}(w_1,\ldots,w_n)
        =2^{2g+n-2}\omega^{TR}_{g,n}(w_1,\ldots,w_n).
    \end{align}
    Equivalently, on $\Sigma$, the following identification holds:
    \begin{align}
        \omega^{eq}_{g,n}(z_1,\ldots,z_n)
        =2^{2g+n-2}\omega^{TR}_{g,n}(w(z_1),\ldots,w(z_n)).
    \end{align}

\begin{proof}
    The proof follows an idea similar to that of \cite[Prop.~3.3]{Alexandrov:2025kzn}. As discussed before, we already have the identifications $\overline{\omega^{eq}_{0,1}}(w)=\frac{1}{2}\omega^{TR}_{0,1}(w)$ and $\overline{\omega^{eq}_{0,2}}(w_1,w_2)=\omega^{TR}_{0,2}(w_1,w_2)$. Due to the identification in the $(g,n)=(0,1),(0,2)$ sectors, the recursions themselves can be identified. The ramification points of $x$ in $z$ and of $x^2$ in $w$ coincide only locally in the recursion of TR on $\mathcal{S}^f$. Thus, for the actual comparison, one has to choose a branch
    \begin{align*}
        z_\pm(w)=\frac{w\pm\sqrt{4+w^2}}{2},
    \end{align*}
    where the residue computation for $z_+$ reproduces the residue at $z=1$, while that for $z_-$ reproduces the residue at $z=-1$. However, since a branch needs to be chosen, only one of them is generated at each recursive step. The analytic continuation of $\omega_{g,n}^{eq}$ by $\iota_2$ yields at each recursive step a factor of $\frac{1}{2}$ through
    \begin{align*}
        \underset{z=1}{\Res}
        =\frac{1}{2}\bigg(
        \underset{z=1}{\Res}
        +
        \underset{z=-1}{\Res}
        \bigg),
    \end{align*}
    where the equality of the two residue contributions follows from the $\iota_2$-invariance. In this way, the full formula for eq-TR is reproduced together with a factor of $\frac{1}{2}$ for each $\omega_{g,n}^{eq}$. Finally, by an induction argument in $2g-2+n$ (with the initial induction data being satisfied), one obtains
    \begin{align*}
        2^{2-2g-n}\omega^{eq}_{g,n}(z_1,\ldots,z_n)
        =
        \omega^{TR}_{g,n}(w(z_1),\ldots,w(z_n)),
    \end{align*}
    which proves the assertion.

    The step from $\omega^{eq}_{g,n}(z_1,\ldots,z_n)$ to $\overline{\omega^{eq}_{g,n}}(w_1,\ldots,w_n)$, as defined in \autoref{def:foldingdifferentials}, is unique since $\omega^{eq}_{g,n}$ is invariant under $\iota_2$.
\end{proof}
\end{theorem}  

The theorem can also be inverted. So in principle, if we have spectral curve with $x=w^2+4$ in some global coordinate on $\mathbb{P}^1$ (the shift by 4 is irrelevant for TR), and $y$ odd under reflection $w\to-w$, then the constructed multidifferentials can also be computed from an eq-TR spectral curve. In some sense, the multidifferentials for those curves can be \emph{unfolded}.

We want to pick up on the previous examples and discuss their foldings:

\begin{example}[Folding rescaled and degenerate Painlev\'e II to Airy]\label{ex:RescaledPainleve2Fold}
    The rescaled and degenerate Painlev\'e II curve of \autoref{ex:RescaledPainleve2} can be folded to the Airy curve. Following the folding procedure, we find the folded spectral curve
    \begin{align*}
        \mathcal{S}^f=(\mathbb{P}^1,x^f,y^f,B^{TR})
    \end{align*}
    with $x^f(w(z))=x(z)^2=w(z)^2+4$ and $y^f(w(z))=\frac{y(z)}{2x(z)}=\frac{w(z)}{2}$, where $x,y$ are given in \autoref{ex:RescaledPainleve2}. This curve corresponds exactly to the shifted Airy curve
    \begin{align*}
        (y^f)^2+1-\frac{x^f}{4}=0,
    \end{align*}
    which encodes $\psi$-class intersection numbers \cite{Eynard:2007kz}. The shift does not have any impact on the correlators or the enumerative invariants. Thus, \emph{the folding of the equivariant rescaled and degenerate Painlev\'e II spectral curve is the Airy spectral curve}.
\end{example}

\begin{example}[Folding Kummer to Bessel (Bessel system to Bessel--Clifford system)]\label{ex:KummerFold}
    The Kummer curve of \autoref{ex:Kummer} can also be folded to the spectral curve
    \begin{align*}
        \mathcal{S}^f=(\mathbb{P}^1,x^f,y^f,B^{TR})
    \end{align*}
    with $x^f(w(z))=x(z)^2=-\frac{\alpha^2}{4}(w(z)^2+4)$ and
    \[
        y^f(w(z))=\frac{y(z)}{2x(z)}
        =-\frac{iw(z)}{\alpha(4+w(z)^2)},
    \]
    where $x,y$ are given in \autoref{ex:Kummer}. The spectral curve can equivalently be written as
    \begin{align*}
        (y^f)^2-\frac{\alpha^2}{4(x^f)^2}-\frac{1}{4x^f}=0.
    \end{align*}
    This curve is called the \emph{Bessel curve} in \cite{Iwaki2019}. Thus, \emph{the folding of the equivariant Kummer spectral curve is the Bessel spectral curve}. In terms of the isomonodromic systems discussed in \autoref{Section2}, this statement connects the Bessel system with the Bessel--Clifford system, as will be discussed in the next subsection.
\end{example}

\begin{example}[Folding of the Chebyshev curve]
    The Chebyshev curve of \autoref{ex:Chebyshev} can be folded to the spectral curve
    \begin{align*}
        \mathcal{S}^f=(\mathbb{P}^1,x^f,y^f,B^{TR})
    \end{align*}
    with
    \[
        x^f(w(z))=x(z)^2=w(z)^2+4
    \]
    and
    \[
        y^f(w(z))
        =\frac{y(z)}{2x(z)}
        =m\frac{\log z}{2(z+\frac{1}{z})}
        =m\frac{\operatorname{arsinh}\!\left(\frac{w}{2}\right)}
        {2\sqrt{w^2+4}},
    \]
    where $x,y$ are given in \autoref{ex:Chebyshev}. Note that $y$ can be shifted by a constant. For the two different solutions $z_\pm$ corresponding to a given $w$, one may therefore choose $m\log(\pm z)$, yielding the same local expansion at the ramification points. Thus, the corresponding $y^f$ has a unique local expansion at $w=0$.

    This spectral curve already appeared in \cite{Alexandrov:2025kzn}, where it was used to derive a new spin on polynomial relations among $\kappa$-classes. We want to emphasise that $dy^f$ is not globally meromorphic, and the relation between $x^f$ and $y^f$ reads
   \begin{align*}
    2T_{m/2}\!\left(\frac{x^f-2}{2}\right)
    =e^{2y^f\sqrt{x^f}}
    +e^{-2y^f\sqrt{x^f}}.
\end{align*}
    This spectral curve is rather unusual, but TR can still be performed provided that $\infty$ is neglected for the residue calculation.
\end{example}

\subsection{TR-folding at the level of quantum curves}

We want to apply the discussed folding between eq-TR and ordinary TR at the level of multi-differentials to the quantum curve and the associated wave function. It was observed in TR that the generating function of the integrated multi-differentials asymptotically solves an ODE which can be seen as a quantum curve of the original spectral curve \cite{Gukov:2011qp,Norbury:2015lcn,Bouchard:2016obz}. For higher-genus spectral curves, some non-perturbative effects need to be added \cite{Iwaki-P1,Marchal:2019bia,Eynard:2023fil}; the final result regarding higher-genus algebraic spectral curves was formulated in \cite{Eynard:2021sxg}. To our knowledge, this provides the first explicit derivation of quantum curves associated with the equivariant TR considered here, obtained through an explicit folding relation with ordinary TR.

\medskip

In our context, where we can fold an equivariant spectral curve and its multi-differentials, we can actually identify its quantum curve from the quantum curve of the folded TR spectral curve by applying \autoref{thm:foldingomegagn}. Let us be more precise about the actual structure. For genus-zero spectral curves, let us define the wave function from a given system of multi-differentials $\omega_{g,n}$ by
\begin{align*}
    \psi_b(z;\hbar)=\exp \sum_{n,g}\frac{\hbar^{2g-2+n}}{n!}\int_b^z...\int_b^z\bigg(\omega_{g,n}-\delta_{g,0}\delta_{n,2}\frac{dx(z_1)dx(z_2)}{(x(z_1)-x(z_2))^2}\bigg)
\end{align*}
where the sum over $n$ starts at 1 and the sum over $g$ at 0. Since the wave function depends on the base point $b$, it is very common to choose $b$ such that $x(b)$ is singular. This simplifies the final quantum curve, that is, the ODE annihilating the wave function \cite{Bouchard:2016obz}, but the two unstable coefficients generally need some regularization, which does not affect the final outcome in an important way.

Let us first show that, for the equivariant multi-differentials considered in \autoref{sec:foldingomegagn}, there is no difference between choosing $\psi_0$ and $\psi_\infty$ for the stable multi-differentials, up to a normalization factor.
\begin{lemma}\label{lem:basepoint}
    Let $\omega_{g,n}$ be generated by equivariant TR for a genus-zero curve with a $\mathbb{Z}_2\times \mathbb{Z}_2$ involution. Then, with the coordinate choice in \autoref{sec:foldingomegagn}, we have
    \begin{align}
        \int_0^\infty\omega_{g,n+1}(\bullet,I)=0
    \end{align}
    for all $2g+n-2\geq0$ and the integral contour avoids the poles. In particular, this implies that, for appropriately regularized wave functions constructed from $\omega_{g,n}$, we have $\psi_0(z;\hbar)=C\psi_\infty(z;\hbar)$.
    \begin{proof}
        We have the identity $\omega_{g,n+1}(\iota_2 z,I)=\omega_{g,n+1}( z,I)$ with $\iota_2z=-\frac{1}{z}$; see \eqref{eqtromegasymmetryi2}. Thus,
        \begin{align*}
            \int_0^\infty\omega_{g,n+1}(\bullet,I)=\int_0^\infty\omega_{g,n+1}(\iota_2\bullet,I)=\int_{-\infty}^0\omega_{g,n+1}(z,I)=-\int^{-\infty}_0\omega_{g,n+1}(z,I).
        \end{align*}
        Since also $\int_{-\infty}^\infty\omega_{g,n+1}(\bullet,I)=0$, the identity follows.

        The statement about the wave function follows by rewriting any integral
        \begin{align*}
            \int_0^z\omega_{g,n+1}(\bullet,I)=\bigg(\int_0^\infty+\int_\infty^z\bigg)\omega_{g,n+1}(\bullet,I)=\int_\infty^z\omega_{g,n+1}(\bullet,I)
        \end{align*}
        appearing in the construction of the wave function.
    \end{proof}
\end{lemma}
This means that a wave function coming from equivariant TR with the given symmetry is independent of the singular base point. From this, we can finally deduce an explicit relation between the wave function generated by eq-TR and the wave function of the folded differentials from ordinary TR:
\begin{proposition}\label{prop:wave}
    Let $\psi^{eq}_{0/\infty}(z;\hbar)$ be the regularized wave function constructed from $\omega^{eq}_{g,n}$ by eq-TR on the spectral curve $\mathcal{S}^{eq}_2$, and let $\psi^{TR}_{\infty}(w;\hbar)$ be the regularized wave function constructed from $\omega^{TR}_{g,n}$ by TR on the spectral curve $\mathcal{S}^f$ with $w:\Sigma\to\Sigma^f$ and $w(z)=z+\iota_2 z$. Then, we have
    \begin{align}
        \psi^{eq}_{0/\infty}(z;\hbar)=C \sqrt{x(z)}\psi^{TR}_\infty(w(z);2\hbar).
    \end{align}
    \begin{proof}
        By \autoref{lem:basepoint}, the base point $b$ can be chosen to be either 0 or $\infty$. By \autoref{thm:foldingomegagn}, all $\omega_{g,n}^{eq}$ and $\omega_{g,n}^{TR}$ are identified up to a symmetry factor that amounts to multiplying $\hbar$ by 2. Lastly, for $\psi^{TR}$, the regularization of $\omega_{0,2}^{TR}$ is performed using $x_{new}=x^2$. Thus, the two regularizations of the corresponding $\omega_{0,2}^{eq/TR}$ are different. One uses $x$ and the other uses $x_{new}=x^2$, giving rise to
        \begin{align*}
            -\frac{1}{2}\int^x\int^x_{reg}\frac{dx_1dx_2}{(x_1-x_2)^2}-\frac{d(x_1^2)d(x_2^2)}{(x_1^2-x_2^2)^2}=-\frac{1}{2}\int^x\int^x_{reg}\frac{dx_1dx_2}{(x_1+x_2)^2}=\frac{1}{2}\log 2x,
        \end{align*}
        which, upon exponentiation, gives rise to $\tilde{C}\sqrt{x}$.
    \end{proof}
\end{proposition}
From the proposition, we can finally conclude that, if we have a quantum curve annihilating the wave function constructed by ordinary TR on the folded spectral curve $\mathcal{S}^f$, we can derive a quantum curve annihilating the wave function coming from eq-TR on the corresponding spectral curve $\mathcal{S}$. The explicit relation between the quantum curves is formulated in the following theorem:
\begin{theorem}\label{thm:quantumcurve}
    Let $\psi^{eq}_{0/\infty} ( z;\hbar)$ be the regularized wave function constructed from $\omega^{eq}_{g,n}$ by eq-TR on the spectral curve $\mathcal{S}^{eq}_2$, and let $\psi^{TR}_{\infty}(w;\hbar)$ be the regularized wave function constructed from $\omega^{TR}_{g,n}$ by TR on the spectral curve $\mathcal{S}^f$ with $w:\Sigma\to\Sigma^f$ and $w(z)=z+\iota_2 z$. Further, let $\hat{P}^{eq}_{\hbar}(\hat{x},\hat{y})$ be the quantum curve annihilating $\psi^{eq}_\infty ( z;\hbar)$ with $\hat{x}=x(z)$ and $\hat{y}=\hbar\frac{\partial}{\partial x(z)}$, and let $\hat{P}^{TR}_{\hbar}(\hat{x}_{new},\hat{y}_{new})$ be the quantum curve annihilating $\psi^{TR}_\infty ( w;\hbar)$ with $\hat{x}_{new}=x_{new}(w)$ and $\hat{y}_{new}=\hbar\frac{\partial}{\partial x_{new}(w)}$, where $x,y,x_{new},y_{new}$ are as in \autoref{sec:foldingomegagn}. Then, the two quantum curves are related by
    \begin{align}\label{eq:thmquantumcurve}
        \hat{P}^{eq}_{\hbar/2}(\hat{x},\hat{y})=\hat{P}^{TR}_\hbar \bigg(\hat{x}^2,\frac{1}{2\hat{x}}\hat{y}-\frac{\hbar}{4\hat{x}^2}\bigg).
    \end{align}
\end{theorem}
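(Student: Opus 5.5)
The plan is to deduce \eqref{eq:thmquantumcurve} directly from \autoref{prop:wave} by a conjugation argument at the level of differential operators. First, I replace $\hbar$ by $\hbar/2$ in \autoref{prop:wave}, which gives
\begin{align*}
\psi^{eq}_{\infty}(z;\hbar/2)=C\sqrt{x(z)}\,\psi^{TR}_\infty(w(z);\hbar).
\end{align*}
By hypothesis $\hat{P}^{TR}_\hbar(\hat{x}_{new},\hat{y}_{new})\psi^{TR}_\infty(w;\hbar)=0$. Everything can be pulled back to the $z$-coordinate (or, equivalently, to a local coordinate $x$ near a generic point) via $x_{new}=x(z)^2$. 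There the chain rule gives $\partial_{x_{new}}=\frac{1}{2x}\partial_x$, so the operators $\hat{x}_{new}$ and $\hat{y}_{new}$ become $x^2$ and $\frac{\hbar}{2x}\partial_x$ acting on functions of $z$.

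Second, I substitute $\psi^{TR}_\infty=C^{-1}x^{-1/2}\psi^{eq}_\infty(\cdot;\hbar/2)$ and conjugate. The map $D\mapsto x^{1/2}Dx^{-1/2}$ is an algebra automorphism of the ring of differential operators in $x$. It fixes multiplication by $x^2$ and sends
\begin{align*}
\frac{\hbar}{2x}\partial_x\;\longmapsto\;\frac{\hbar}{2x}\Big(\partial_x-\frac{1}{2x}\Big)=\frac{1}{2x}\,\hbar\partial_x-\frac{\hbar}{4x^2}.
\end{align*}
Because this is an automorphism, every word in $\hat{x}_{new},\hat{y}_{new}$ is mapped to the same word in the conjugated generators, in the same order. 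Hence
\begin{align*}
x^{1/2}\,\hat{P}^{TR}_\hbar\big(x^2,\tfrac{\hbar}{2x}\partial_x\big)\,x^{-1/2}=\hat{P}^{TR}_\hbar\Big(\hat{x}^2,\tfrac{1}{2\hat{x}}\hat{y}-\tfrac{\hbar}{4\hat{x}^2}\Big),
\end{align*}
and this operator annihilates $\psi^{eq}_\infty(z;\hbar/2)$. The constants $C,\tilde C$ coming from the regularization drop out, since the equation is linear.

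Third, I identify this annihilating operator with $\hat{P}^{eq}_{\hbar/2}(\hat{x},\hat{y})$, which by definition annihilates $\psi^{eq}_\infty(z;\hbar/2)$. \autoref{lem:basepoint} guarantees that it does not matter whether the base point is $0$ or $\infty$. Uniqueness of the quantum curve is understood as uniqueness of the minimal-order operator, up to left multiplication by a function, annihilating the WKB-type wave function. The conjugated operator has the same order as $\hat{P}^{TR}$, namely two in the hyperelliptic case, so the two operators agree after normalizing the leading coefficient.

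The main obstacle I expect is bookkeeping of the $\hbar$-normalization in the symbol $\hat{y}$. In $\hat{P}^{eq}_{\hbar/2}$, $\hat{y}$ should be read as the operator carrying the Planck constant of the recursion that produced $\psi^{eq}$, and one must check that the factor $2$ from $\hbar\mapsto2\hbar$ in \autoref{prop:wave} combines with the factor $\frac{1}{2x}$ from the chain rule exactly as written. I would do this check explicitly on the Airy folding of \autoref{ex:RescaledPainleve2Fold}, where both quantum curves are known, to fix the convention before writing the general argument. A secondary point is that $\sqrt{x}$ and $w(z)$ are only locally single-valued. This is harmless, because the identity is one between differential operators and can be verified locally near a generic point.
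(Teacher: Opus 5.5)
Your argument is correct and follows the same route as the paper, which presents the theorem as an immediate consequence of \autoref{prop:wave} (with $\hbar\to\hbar/2$), the chain rule $\partial_{x_{new}}=\frac{1}{2x}\partial_x$, and conjugation by $\sqrt{x}$. Your second step already settles the normalization you flagged as the main obstacle: the formula holds with $\hat y=\hbar\partial_x$ (the TR Planck constant) on the right-hand side, as the paper's Airy and Kummer examples confirm, whereas reading it as $(\hbar/2)\partial_x$ would change $\frac{1}{2\hat x}\hat y$ into $\frac{1}{\hat x}\hat y$.
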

The theorem provides a class of examples where it is now known that eq-TR gives rise to a quantum curve which is related to folding at the level of differentials. To our knowledge, this is the first time quantum curves have been computed for equivariant TR, although this is achieved by relating it back to ordinary TR.

Both wave functions satisfy second-order differential equations, each of which has two independent solutions. At the formal WKB level, these solutions are characterised by $\pm\hbar$ or by a choice of square-root branch when passing from $z$ to $x$, which is essentially reflected by $\hbar\to-\hbar$. Therefore, by abuse of notation, we can write $\psi^{eq}_{0/\infty}(x;\hbar)$ and $\psi^{TR}_\infty(x_{new};\hbar)$. In these terms, it follows from \autoref{prop:wave} that the relation between these wave functions is
\begin{align*}
    \psi^{eq}_{0/\infty}(x;\hbar)
    =
    C\sqrt{x}\,\psi^{TR}_\infty(x^2;2\hbar).
\end{align*}
Let us illustrate the statements above using some of the examples that appeared earlier.

\begin{example}[Quantum curve of the equivariant rescaled and degenerate Painlev\'e II curve]
    This example is a continuation of \autoref{ex:RescaledPainleve2} and \autoref{ex:RescaledPainleve2Fold}. The folded differentials correspond to the Airy spectral curve, and the quantum curve annihilating $\psi^{TR}_\infty(x_{new};\hbar)$ was computed in \cite{Bergere:2009zm,Bouchard:2016obz}. The ordinary TR quantum curve reads
    \begin{align*}
        \hat{P}^{TR}_\hbar(\hat{x}_{new},\hat{y}_{new})
        =
        \hat{y}_{new}^2
        -
        \frac{\hat{x}_{new}-4}{4}.
    \end{align*}
    We can conclude from \autoref{thm:quantumcurve} that the quantum curve annihilating the equivariant wave function $\psi^{eq}_0(x;\hbar)$ is of the form (after multiplication by $\hat{x}^2$)
    \begin{align*}
        \hat{P}^{eq}_\hbar(\hat{x},\hat{y})
        =
        4\hat{y}^2
        -
        8\frac{\hbar}{\hat{x}}\hat{y}
        +
        5\frac{\hbar^2}{\hat{x}^2}
        -
        \hat{x}^2(\hat{x}^2-4),
    \end{align*}
    which reduces to the classical curve of \autoref{ex:RescaledPainleve2}, with the rescaling $y\to 2y$ required by $\omega_{0,1}=\frac{1}{2}y\,dx$ for eq-TR.
\end{example}

\begin{example}[Quantum curve of the equivariant Kummer curve]
    This example is a continuation of \autoref{ex:Kummer} and \autoref{ex:KummerFold}. The quantum curve was derived in \cite{Iwaki2019} from variational formulas and is called the Bessel curve in \emph{op.\ cit.}, but is given with a different parametrization with $\beta_{0+}=i,\, \beta_{0-}=-i$. The monodromy at $\{ 0 \}$ are related by $4\lambda_0^2=(\alpha+\frac{1}{2}) ^2-1$ (we will denote it under by $\alpha_0$ for simplicity), and we take the integration divisor there with $\nu_{0+}=\nu_{0-}=0$. Thus, the quantum curve reads
    \begin{align*}
        \hat{P}^{TR}_\hbar(\hat{x}_{new},\hat{y}_{new})
        =
        \hat{y}_{new}^2
        +
        \frac{\hbar}{\hat{x}_{new}}\hat{y}_{new}
        -
        \frac{\hat{x}_{new}+\alpha_0^2}{4\hat{x}_{new}^2}.
    \end{align*}
    Note that this curve coincides completely with the curve given in \eqref{Bessel-Cliffordcurvequantum}. We can conclude from \autoref{thm:quantumcurve} that the quantum curve annihilating the equivariant wave function $\psi^{eq}_0(x;\hbar)$ is of the form (after multiplication by $\hat{x}^2$)
    \begin{align*}
        \hat{P}^{eq}_\hbar(\hat{x},\hat{y})
        =
        4\hat{y}^2
        -
        1
        -
        \frac{\alpha_0^2-\hbar^2}{\hat{x}^2},
    \end{align*}
    which reduces to the classical curve of \autoref{ex:Kummer}, with the rescaling $y\to 2y$ required by $\omega_{0,1}=\frac{1}{2}y\,dx$ for eq-TR. Note that this curve matches the quantum curve produced by the Bessel system of the Lax matrix in \eqref{eq2-30} and, up to an additional scaling factor, the wave functions can be explicitly given in terms of Bessel functions.
\end{example}

\section{Further directions} \label{sec4}

The results obtained above suggest that the folding relating the Bessel and Bessel--Clifford systems should be viewed as part of a broader geometric mechanism rather than as a phenomenon specific to a single pair of special functions. In particular, the compatibility between the symmetry reduction of meromorphic connections, the passage to a quotient spectral curve, and the corresponding transformation of topological recursion data raises several natural questions.

\begin{itemize}
    \item The classical hypergeometric differential equations are organized by a diagram of confluences in which successive coalescences of singularities relate the Gauss hypergeometric equation to the confluent hypergeometric, Bessel, and related equations. The analysis in this paper suggests that this diagram may admit a geometric refinement in terms of symmetry reduction and folding of meromorphic connections. Our construction realizes, at the level of connections, the passage
\begin{align}
\mathrm{Bessel}
\longrightarrow
\mathrm{Bessel\mbox{-}Clifford}
\end{align}
through a $\mathbb Z_2$-equivariant structure and the quotient coordinate $\xi=\lambda^2$. At the same time, the Schrödinger-gauge description identifies the corresponding potentials with the Kummer and Bessel potentials appearing in the confluent hypergeometric picture. Thus the usual relation between special-function equations is accompanied by a relation between their underlying meromorphic connections, spectral curves, and formal types. This raises the question of whether the “entire hypergeometric confluence diagram can be lifted to a diagram of equivariant meromorphic connections", in which arrows corresponding to confluence are supplemented by arrows corresponding to finite coverings and foldings. One may equivalently ask whether there exists a categorical or moduli-theoretic formulation of these results. The parameter shift $\alpha \longmapsto \alpha-\frac{1}{2}$ appearing in the present example would then be interpreted as part of a general rule governing the transformation of formal monodromy under ramified descent. A particularly interesting problem is to determine whether the fractional shifts of monodromy parameters produced by folding can be described directly from the ramification indices and the characters of the equivariant structure. Such a result would provide a conceptual explanation for the shifts.
\item The present construction is based on a $\mathbb Z_2$-symmetry. One natural direction is studying folding for general finite groups acting on meromorphic connections and their spectral data (Dihedral groups, higher finite groups, etc). 
One natural question is whether successive foldings can be performed consistently when studying for instance two different group actions
\begin{align}
    \Sigma \longrightarrow \Sigma/G_1 \longrightarrow (\Sigma/G_1)/G_2 \cong
    \Sigma/(G_1\times G_2), \qquad\text{if the two actions commute},
\end{align}
and whether the resulting connection and spectral data depend only on the subgroup generated by $G_1$ and $G_2$. It would also be interesting to determine which finite-group actions on rank-two connections lead, after quotienting, to ramified rank-two systems of special-function type. This could provide a systematic mechanism for producing new special-function equations from invariant sectors of larger isomonodromic systems. On this point, the higher rank isomonodromy systems offer a wide range of possibilities and freedom to study different group reductions and foldings.
\item  The relation $\xi=\lambda^2,$ with $\widetilde y=2\lambda y$ identifies the spectral differentials on the unfolded and folded curves $\tilde{y}\,d\lambda=y\,d\xi$. This indicates that the natural object preserved by the folding is not necessarily the individual coordinate functions $x$ and $y$, but rather the spectral differential. This observation suggests investigating folding intrinsically at the level of spectral curves equipped with their meromorphic differentials. One may ask under which conditions a $G$-equivariant spectral curve $(\widetilde\Sigma,\widetilde x,\widetilde y,B)$ admits a quotient spectral curve such that $\pi^*(y\,dx)=\widetilde y\,d\widetilde x$. Such a formulation would separate the geometric content of the folding from the particular choice of coordinates and gauges.

The example considered here also shows that the Bergman kernel requires special attention. The pullback of the Bergman kernel from the quotient is not simply the ordinary Bergman kernel on the covering curve; rather, it contains an orbit contribution. This suggests that the appropriate notion of a folded spectral curve should include, as part of its data, a prescription for the transformation of the bidifferential. The example studied suggests that eq-TR may provide the natural recursive counterpart of geometric folding. In the present example, the modified bidifferential $B\neq B^{TR}$
is precisely the object which compensates for the difference between the bidifferential on the covering curve and the pullback of the bidifferential on the quotient. A natural problem is therefore to formulate a general “folding theorem for TR". Given a finite-group-equivariant spectral curve, one would like to determine conditions under which eq-TR on the covering curve is equivalent to ordinary TR on the quotient. In particular, one would seek an explicit relation of the form
\begin{align}
\pi^*\omega^{\mathrm{fold}}_{g,n}= \mathcal {F}\!\left( \omega^{\mathrm{eq}}_{g,n} \right),
\end{align}
where $\mathcal {F}$ is determined by the group action and its characters. 

\item The first instance of such a symmetry was investigated from the meromorphic connection side establishing a reduction from some Painlev\'e transcendents (the higher flows of the Painlev\'e IV hierarchy) to other Painlev\'e transcendents (the Flaschka-Newell Painlev\'e II hierarchy). In this case with $g\geq 1$, the symplectic manifolds of the two hierarchies are identical after the symmetry reduction. The folding on the meromorphic connection side works well in a similar method to the operation studied here. The TR side deals with an equivariant spectral curve of higher genus and it would be interesting to study the folding of such higher genus cases. The hierarchy of Painlev\'e systems provides a particularly natural setting to search for such examples (where the genus of the spectral curves are $\geq 1$ corresponding to a non-trivial symplectic structure), since symmetry reductions already produce many lower-dimensional invariant loci. A successful generalization would establish folding not merely as a useful operation in individual examples, but as a geometric operation acting simultaneously on the moduli of meromorphic connections, their spectral curves, and topological recursion invariants. The Bessel--Clifford example studied here may then be regarded as the simplest non-trivial model of such a theory.

\item A natural higher-genus example of folding from eq-TR
to ordinary TR is provided by the elliptic curve
\begin{align}\label{hyperelliptic}
    y^2=x^4+ax^2+b,
\end{align}
where $b\neq 0$ and $a^2\neq 4b$.
This curve admits the same Klein four-group of
symmetries considered in this article. In particular,
the involution
$
    \iota_2:(x,y)\longmapsto(-x,-y)
$
acts freely on the smooth compactification.
Introducing the invariant coordinates
$
    x_{\mathrm{new}}=x^2,
    y_{\mathrm{new}}=\frac{y}{2x},
$
one obtains the quotient curve
\[
    4x_{\mathrm{new}}y_{\mathrm{new}}^2
    =x_{\mathrm{new}}^2
     +a x_{\mathrm{new}}+b.
\]
The quotient is again an elliptic curve, and the
folding defines an unramified double covering,
equivalently a degree-two isogeny.
Although the genus is preserved in this example,
the period lattices are related nontrivially.
This suggests investigating how folding acts on
homology bases, period matrices, and normalized
Bergman kernels. In particular, it would be
interesting to determine under which conditions
the equivariant topological recursion correlators
on the covering curve descend to ordinary
topological recursion correlators on the quotient,
and how the normalization of the bidifferential
affects this correspondence. More generally,
higher-genus folding need not preserve the genus,
as can already be seen by considering other
involutions of the same curve. A related question is whether the elliptic spectral curve arising
in the spin enumeration of hypermaps in
\cite{ABDKS2026SpinHypermaps} admits a folding interpretation
which might be different to the one considered here. 

\item A different possible application arises in
knot theory. Topological recursion has been used
to compute colored HOMFLY invariants of torus
knots \cite{Brini:2011wi}, starting from an
$SL(2,\mathbb Z)$ transformation of the
resolved-conifold mirror curve. The resulting
description involves fractional winding
contributions, whose unwanted terms can be
removed by Galois averaging.
In \cite{Gu:2014yba}, this construction
was proposed using topological recursion on
augmentation curve, together with a
calibrated annulus kernel. The modification of
the bidifferential is similar but still different to the bidifferential appearing in
equivariant topological recursion \cite{Giacchetto:2021atk}. The conjecture of \cite{Gu:2014yba} about topological recursion on the augmentation curve of torus knots is proved in \cite{ShadrinHock:Augmentation}.
The construction does not arise
from a finite-group action on the spectral
curve itself.
This raises the question of whether the
calibration procedure can also be understood
geometrically as a folding, possibly after
passing to a suitable covering curve.
In particular, one may ask whether the
calibrated annulus kernel can be expressed
as a character-weighted sum of pullbacks
of a bidifferential on a covering curve,
and whether such a correspondence extends
to the stable correlators.

\end{itemize}

\bibliographystyle{halpha-abbrv}
\bibliography{References}

\end{document}